\newtheorem{theorem}{Theorem}
\theoremstyle{plain}
\newenvironment{assumption*}
 {\ifnum\value{subassumption}=0 \stepcounter{assumption}\fi\subassumption}
 {\endsubassumption}
\newenvironment{assumption+}[1]
 {\subassumption}
 {\endsubassumption}
\theoremstyle{definition}
\newtheorem{assump}{Assumption}
\newtheorem{definition}{Definition}
\newtheorem*{definition*}{Definition}
\newtheorem{example}{Example}
\newtheorem{lemma}{Lemma}
\newtheorem{proposition}{Proposition}
\numberwithin{equation}{section}
\begin{document}
\title{Vector Copulas}
\author{Yanqin Fan and Marc Henry}
\thanks{The first version is of July 16, 2020. This version is of \today. It
is a significantly revised version of the manuscript that was previously
circulated under the title \textquotedblleft Vector copulas and vector Sklar
theorem\textquotedblright . We are grateful to Xiaohong Chen and three
anonymous referees for very helpful suggestions and to Guillaume Carlier,
Paul Embrechts, Christian Genest, Haijun Li, L\"{u}dger R\"{u}schendorf,
Brendan Pass, Marco Scarsini, seminar participants at the University Carlos
III de Madrid, and participants of the Pacific Interdisciplinary hub on
Optimal Transport Kickoff Event for helpful comments and discussions. We
also thank Moyu Liao, Hyeonseok Park, and Xuetao Shi for excellent research
assistance. Corresponding author: Yanqin Fan: \texttt{fany88@uw.edu}
Department of Economics, University of Washington, Box 353330, Seattle, WA
98195.}
\address{University of Washington and Penn State}

\begin{abstract}
This paper introduces vector copulas associated with multivariate
distributions with given multivariate marginals, based on the theory of
measure transportation, and establishes a vector version of Sklar's theorem.
The latter provides a theoretical justification for the use of vector
copulas to characterize nonlinear or rank dependence between a finite number
of random vectors (robust to within vector dependence), and to construct
multivariate distributions with any given non-overlapping multivariate
marginals. We construct Elliptical and Kendall families of vector copulas,
derive their densities, and present algorithms to generate data from them.
The use of vector copulas is illustrated with a stylized analysis of
international financial contagion.

\vskip20pt

\noindent \textit{Keywords}: Measure transportation; Vector ranks; Vector
copulas; Elliptical vector copulas; Kendall vector copulas, Financial
contagion. \vskip10pt

\noindent\textit{JEL codes}: C18; C46; C51.
\end{abstract}

\maketitle



\newpage



\section{Introduction}

The cornerstone of copula theory, known as \emph{Sklar's Theorem}, from %
\citet{Sklar:59}, states that (i) for any multivariate distribution function~%
$F$ on~$\mathbb{R}^{K}$, with univariate marginal distribution functions~$%
F_{1}$, ..., $F_{K}$, there exists a copula function~$C$ such that 
\begin{equation}
F\left( x_{1},\ldots ,x_{K}\right) =C\left( F_{1}\left( x_{1}\right) ,\ldots
,F_{K}\left( x_{K}\right) \right) ,  \label{Sklar59}
\end{equation}%
and (ii) given any copula function~$C$ and any collection of univariate
distribution functions $F_{1},\ldots ,F_{K}$, (\ref{Sklar59}) defines a
multivariate distribution function with copula~$C$ and the marginal
distributions $F_{1},\ldots ,F_{K}$. When the marginal distributions are
continuous,~$C$ in part (i) of Sklar's Theorem is the unique copula
associated with~$F$ and it characterizes the rank dependence structure in $F$%
. Moreover, (ii) provides a general approach to constructing multivariate
distributions from univariate ones.

By virtue of Sklar's Theorem, copulas can be used to characterize nonlinear
or rank dependence between random variables, as distinct from marginal
distributional features, to compute bounds on parameters of multivariate
distributions in problems with fixed marginals, and to construct parametric
and semiparametric families of multivariate distributions from univariate
ones. Applications to quantitative finance, particularly risk management,
portfolio choice, derivative pricing, financial contagion and other areas,
where precise measures of dependence are crucial, are well known and
extensively reviewed, see for instance \citet{Embrechts:2009} and references
therein.

Applications to economics, though fewer, have been no less expansive. First,
the characterization of dependence, as distinct from marginal distributional
features, has been instrumental in modeling the propagation of shocks in
contagion models in \citet{BC:2013}, in holding dependence fixed to measure
partial distributional effects in \citet{Rothe:2012}, in identifying and
estimating auction models in \citet{HLP:2012} and \citet{He:2017}. Second,
the copula approach to problems with fixed marginals has allowed the
computation of sharp bounds on various relevant parameters in treatment
effects models with randomized treatment or treatment on observables, in %
\citet{CL:2019}, \citet{FM:2017} and the many references therein. It has
also been applied to problems of data combination, including ecological
inference, see \citet{RM:2007} and \citet{FSS:2014} for instance. Third, the
copula as a modelling and inference tool has been used to discipline
multiple latent variables and multiple dimensions of unobserved
heterogeneity. As such, the copula approach has been applied to sample
selection models, in \citet{Smith:2003}, to regime switching models in %
\citet{FW:2010} and \citet{CFW:2014}, to simultaneous equations with binary
outcomes in \citet{HV:2017}, as well as the modeling of earnings dynamics in %
\citet{BR:2009} and the measure of intergenerational mobility in %
\citet{CHKS:2014}. Fourth, semiparametric econometrics models including both
time series and cross section models have been constructed using flexible
parametric copulas to model contemporaneous dependence structures in
multivariate models or time dependence in univariate time series models, see %
\citeauthor{CF:2006} (\citeyear{CF:2006,CF:2006b}), \citet{CFT:2006}, %
\citet{Patton:2006}, \citet{Beare:2010}, \citet{CWY:2009}, and %
\citet{CKZ:2009} for properties, estimation, and inference in such models.
The list is surely not exhaustive.

In all these applications, the need for a notion of copula that links
multivariate marginals arises naturally. In propagation models, %
\citet{MP:2017} highlight the need to distinguish within-group and
between-group dependence. Models of treatment effects with multivariate
potential outcomes of interest fall in the class of problems with given
multivariate marginals, in cases where the latter are identified. Censored
and limited dependent variables models with clustered latent variables call
for hierarchical modeling, where a copula operates on vectors of latent
variables, each of which can also be modeled with a traditional copula. In
integrated risk management, modeling and measuring risks of portfolios of
several groups of risks will also benefit from a copula-like tool for
linking multivariate marginals, see \citet{EP:2006}.

However, Sklar's Theorem, as stated above, requires that all the marginals
be univariate. Indeed, \citet{GQR:95} shows that for two random vectors, if
the function~$C:\left[ 0,1\right] ^{2}\rightarrow \left[ 0,1\right] $ is
such that~$F\left( x_{1},x_{2}\right) =C\left( F_{1}\left( x_{1}\right)
,F_{2}\left( x_{2}\right) \right) $ defines a~$\left( d_{1}+d_{2}\right) $%
-dimensional distribution function with marginals~$F_{1}$ with support in~$%
\mathbb{R}^{d_{1}\text{ }}$and $F_{2} $ with support in~$\mathbb{R}^{d_{2}%
\text{ }}$for all~$d_{1}$ and~$d_{2}$ such that~$d_{1}+d_{2}\geq 3$, and for
all distribution functions~$F_{1}$ and~$F_{2}$, then~$C\left(
u_{1},u_{2}\right) =u_{1}u_{2}$. Hence, the only possible copula which works
with non-overlapping multivariate marginals is the independence copula. %
\citet{Ressel:2019} generalizes this impossibility result to more than two
random vectors.

The objective of the present work is to circumvent this impossibility
theorem. The paper develops a vector copula that generalizes the traditional
copula to model and characterize nonlinear or rank dependence between a
finite number of random vectors of any finite dimensions. It relies on the
combination of the theory of probability distribution with given overlapping
marginals, particularly \citet{Vorobev:62} and \citet{Kellerer:64}, with the
theory of transport of probability distributions, particularly \citet{RR:90}%
, \citet{Brenier:91} and \citet{McCann:95}. First, we introduce the concept
of a vector copula and establish a vector version of Sklar's Theorem using
vector ranks proposed in \citet{CGHH:2017} as multivariate probability
transforms to remove marginal distributional features. Vector copulas and
the vector Sklar theorem allow the construction of distributions with 
\textit{any} given non overlapping multivariate marginals, thereby
overcoming the weakness of traditional copulas identified in \citet{GQR:95}.
Second, we show that vector copulas are invariant to comonotonic
transformations, where the multivariate notion of comonotonicity is borrowed
from \citet{GH:2012} and \citet{EGH:2012}, and we define comonotonic and
countermonotonic vector copulas extending Fr\'{e}chet extremal copulas.
Third, we construct flexible parametric families of vector copulas including
Elliptical and Kendall vector copulas and provide algorithms for simulating
from them. They reduce to the corresponding classical copulas when all the
marginals are univariate. Using the Vector Sklar Theorem, we construct new
families of multivariate distributions with any fixed non-overlapping
multivariate marginals and Elliptical or Kendall vector copulas. To
illustrate a possible use of vector copulas, we follow \citet{CF:2006b} and %
\citet{MP:2017} and study financial contagion through the evolution of
between-vector dependence before, during and after the 2008 financial
crisis, for a collection of five aggregate stock indices. 


\subsection*{Related literature}

Separate efforts have been carried out to develop dependence measures for
random vectors robust to within-vector dependence on the one hand, and to
construct specific multivariate distributions with given multivariate
marginal distributions, on the other hand. For the former, \citet{MP:2017}
propose a dependence measure between a finite number of random vectors that
is robust to within-vector dependence and apply it to the study of contagion
in financial markets, inter alia. \citet{GSS:2014} propose extensions of
Spearman's rho and Kendall's tau for two random vectors and show that they
are invariant to increasing transformations of each component of the random
vector. As in the case of random variables, these global measures are
insufficient to characterize the complete nonlinear dependence structure
between random vectors for which analogues of copulas are needed. %
\citet{LSS:96} develop a a device to link several random vectors with given
distributions, based on the Knothe-Rosenblatt transform (%
\citet{Rosenblatt:52}, \citet{Knothe:57}). Related ideas are developed in %
\citet{Rusch:85} and Section~1.6 of \citet{Rusch:2010}.


\subsection*{Notation and conventions}

Let~$(d_{1},\ldots ,d_{K})$ be a finite collection of integers and for each~$%
k\leq K$, let~$\mu _{k}$ be the uniform distribution on~$\mathcal{U}%
_{k}:=[0,1]^{d_{k}}$. Let~$P_{X}$ stand for the distribution of the random
vector~$X$. Let~$P$ denote a given distribution on~$\mathbb{R}^{d_{1}}\times
\ldots \times \mathbb{R}^{d_{K}}$ with marginals~$P_{k}$ on~$\mathbb{R}%
^{d_{k}}$, each~$k\leq K$. When stating generic results applying to all~$%
k\leq K$ such as vector quantiles and ranks, we omit the subscript~$k $ from~%
$d_{k},$ $\mu _{k}$, and~$P_{k}$ unless stated otherwise. Following %
\citet{Villani:2003}, we let~$g\#\nu $ denote the \emph{image measure} (or 
\emph{push-forward}) of a measure~$\nu$ by a measurable map~$g:\mathbb{R}%
^{d}\rightarrow \mathbb{R}^{d}$. Explicitly, for any Borel set~$A$, $g\#\nu
(A):=\nu (g^{-1}(A))$. Specifically,~$(Id,g)\#\nu$ denotes the degenerate
distribution of the vector~$(X,g(X))$, where~$X\sim\nu$, where the latter
means that~$X$ has distribution~$\nu$, whereas~$X\overset{d}{=}Y$ means that~%
$X$ and~$Y$ are identically distributed. The product measure of~$\nu_1$ and~$%
\nu_2$ is denoted~$\nu_1\otimes\nu_2$. The notation~$\|x\|$ refers to the
Euclidean norm. Denote~$\mathbb{S}^{d}:=\{x\in \mathbb{R}^{d}:\;\Vert x\Vert
\leq 1\}$ the unit ball,~$\mathcal{S}^{d-1}:=\{x\in \mathbb{R}^{d}:\;\Vert
x\Vert =1\}$ the unit sphere in~$\mathbb{R}^{d}$. The symbol~$\partial $
denotes the subdifferential,~$\nabla$ the gradient,~$D$ is the Jacobian and~$%
D^2$ the Hessian. The domain of a function~$\psi$ is denoted dom$(\psi)$.
The convex conjugate of a convex lower semicontinuous function~$\psi$ is
denoted~$\psi^\ast$. We use the standard convention of calling weak
monotononicity non-decreasing or non-increasing, as the case may be. If~$%
\Gamma: \mathbb{R}^d\rightrightarrows\mathbb{R}^d$ is a correspondence, then~%
$\Gamma^{-1}(A):=\{x\in\mathbb{R}^d\;:\Gamma(x)=y, \mbox{ some }y\in A\}$.
The transpose of a matrix~$A$ is denoted~$A^{\top } $. The collection of
subsets of a set~$S$ is denoted~$2^S$.


\subsection*{Organization of the paper}

Section~\ref{sec:VC} introduces vector copulas and their properties. Section~%
\ref{sec:param} develops parametric families of vector copulas. 
Appendix~\ref{app:OT} reviews some notions in convex analysis, measure
transport and vector quantiles and ranks. Appendix~\ref{app:proofs} collects
proofs of results in the main text.



\section{Vector Copulas and Vector Sklar Theorem}

\label{sec:VC}

The objective of this work is to provide a tool to analyze patterns of
dependence between random vectors, separately from patterns of dependence
within each vector and marginal information. This will be achieved in two
steps. This section will be concerned with isolating dependence between
random vectors from within dependence and marginal information, and
characterizing the former with a mathematical object called \emph{vector
copula}. Section~\ref{sec:param} will be concerned with the development of
parametric families of vector copulas for the purposes of fitting data,
estimating dependence and constructing new probability distributions with
given multivariate marginals.

The motivations and the details of the construction we propose are best
illustrated with the manipulation of a real data set, namely a subset of the
data analyzed in \citet{MP:2017} to highlight patterns of contagion between
regional financial markets. We consider~$5$ stock market indices, namely the
Hang Seng Index (Hong Kong), the Nikkei 225 Index (Japan), the FTSE 100
Index (United Kingdom), the S\&P 500 Index (United States) and the DAX 100
Index (Germany). We consider EGARCH(1,1) residuals from weekly returns for
the period from May 2009 to January 2014 (post financial crisis) and compare
with the periods January 2004-December 2007 (pre-crisis) and January
2008-April 2009 (crisis) respectively, whenever we find it useful for
illustrative purposes. Unless otherwise mentioned, the names of the stock
indices will refer to the data generating processes for the residuals in the
post-crisis period. We shall index data points with~$i$, and~$n$ will denote
the sample size, i.e.,~$209$,~$65$ and~$566$ for the pre-crisis, crisis and
post-crisis periods respectively.


\subsection{Copulas}

\label{sec:copulas}

Consider first a pair of indices, namely the Hang Seng Index and the FTSE
100 Index. Call~$Y_1$ and~$Y_2$ the random variables with continuous
cumulative distribution functions~$F_1$ and~$F_2$, that generate the Hang
Seng and FTSE residual data, respectively. In the scatterplot of the sample~$%
(Y_{1i},Y_{2i})_{i=1}^{n}$, shown in Figure~\ref{fig:scatter2}, patterns of
dependence between~$Y_1$ and~$Y_2$ are blurred by the marginal features of
each. To isolate patterns of dependence from marginal features, each random
variable in the pair is mapped into a uniform through the probability
integral transform. The resulting random vector is~$(F_1(Y_1),F_2(Y_2))$. It
has uniform marginals, and, since~$F_1$ and~$F_2$ are increasing
transformations, rank orderings, hence rank dependence features, are
undisturbed. The scatterplot of the sample of empirical ranks~$(\hat
F_1(Y_{1i}),\hat F_2(Y_{2i)})_{i=1}^{n}$, where~$\hat F_j$ denotes the
empirical cumulative distribution function of~$Y_j$, $j=1,2$, now
illustrates patterns of dependence, such as left and right tail dependence,
isolated from marginal features in Figure~\ref{fig:rank2}.

\begin{figure}[tbp]
\centering
\begin{subfigure}{.3\linewidth}
   \hspace*{-0.8in}
\includegraphics[width=2.5in]{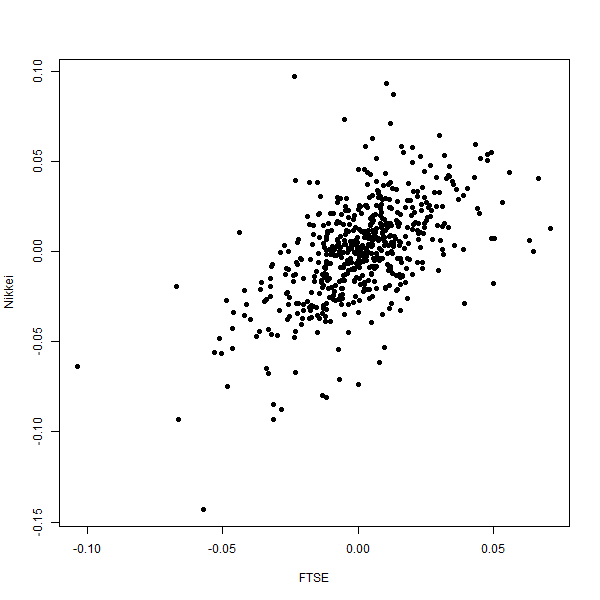} 
\caption{Residuals}
 \label{fig:scatter2}
\end{subfigure}\hspace{0.5cm}%
\begin{subfigure}{.3\linewidth}
   \includegraphics[width=2.5in]{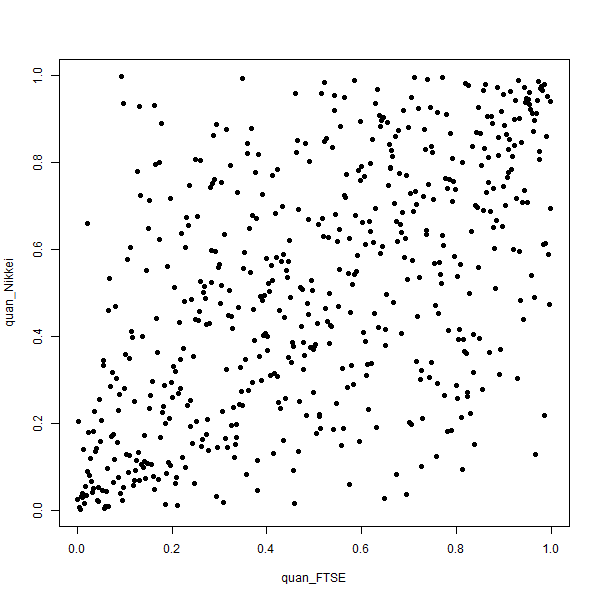} 
\caption{Empirical ranks}
 \label{fig:rank2}
\end{subfigure}
\caption{Scatterplot of residuals (left) and empirical ranks (right).}
\end{figure}

The cumulative distribution function of the random vector~$%
(F_1(Y_1),F_2(Y_2))$ is the \emph{copula} of~$(Y_1,Y_2)$. Conditions for the
claim that the copula characterizes dependence features of~$(Y_1,Y_2)$ are
uniqueness and invariance to transformations of the marginals that leave
ranks unaffected. If~$T_1$ and~$T_2$ are increasing functions, then the
copulas of~$(Y_1,Y_2)$ and~$(T_1(Y_1),T_2(Y_2))$ are identical.

Let $F$ be the cumulative distribution function of~$(Y_1,Y_2)$. Assuming the
marginals are continuous, it follows from Sklar's Theorem that there exists
a unique copula function of~$(Y_1,Y_2)$ or~$F$. It is the distribution
function of the pair of probability integral transforms\footnote{%
For non-continuous marginals, one can use the distributional transform in %
\citet{Rusch:2009} to define a unique copula.} $(F_1(Y_1),F_2(Y_2))$, i.e., 
\begin{eqnarray*}
C\left( u_1,u_2\right) =\Pr \left( F_1(Y_1)\leq u_1,F_2(Y_2)\leq u_2\right)
=F\left( F_1^{-1}\left( u_1\right) ,F_2^{-1}\left( u_2\right) \right) ,
\end{eqnarray*}%
where~$F_{1}^{-1}\left( u_{1}\right)$ and~$F_{2}^{-1}\left( u_2\right)$
denote the quantile functions of~$Y_{1}$ and~$Y_2$. Uniqueness results from
the fact that the probability integral transform is the unique increasing
map pushing forward the marginals $F_{1}$ and~$F_{2}$ to the uniform on $%
\left[ 0,1\right] $. When $F$ is absolutely continuous with pdf $f$, the
copula density function is given by 
\begin{equation}
c\left( u_1,u_2\right) =f\left( F_{1}^{-1}\left( u_{1}\right)
,F_{2}^{-1}\left( u_{2}\right) \right) \prod_{k=1}^{2}\left[ f_{k}\left(
F_{k}^{-1}\left( u_{k}\right) \right) \right]^{-1} .  \label{CopulaD}
\end{equation}

Much of the empirical success of copula theory stems from the ability to
parsimoniously model this dependence separately from marginal features with
parametric families of copulas, and to produce new families of distribution
functions by fitting a parametric copula with any set of marginal
distributions. A common procedure to construct parametric copula families is
best illustrated with the Gaussian copula family. Let~$\Phi_2(x,y;\rho)$ be
the cumulative distribution function of the standard bivariate normal
distribution with correlation coefficient~$\rho$ and let~$\Phi$ be the
cumulative distribution function of the standard univariate normal
distribution. Then, 
\begin{equation*}
C^{\mbox{\scriptsize Ga}}(u_1,u_2;\rho)=\Phi_2(\Phi^{-1}(u_1),%
\Phi^{-1}(u_2);\rho)
\end{equation*}
is the bivariate Gaussian copula with parameter~$\rho$. Popular alternatives
include Student's~$t$ and Archimedean copulas. The densities for the most
common are given in Appendix~\ref{app:cop-list} for convenience.


\subsection{Vector ranks}

\label{sec:vr}

We now turn to the characterization of dependence between two random
vectors. Take~$Y_1$ to now be the bivariate random vector that is assumed to
have generated the residuals from the Hang Seng and the Nikkei indices.
Similarly, take~$Y_2$ to now be the trivariate random vector that is assumed
to have generated the residuals from the FTSE, S\&P and DAX indices. The $%
5\times5$ scatterplot matrix in Figure~\ref{fig:scatter5} shows patterns of
dependence between components of~$Y_1$ and~$Y_2$ that are blurred by
marginal features and within vector dependence, i.e., dependence between the
Hang Seng and the Nikkei, and dependence between the FTSE, the S\&P and the
DAX.

\begin{figure}[tbp]
\centering
\includegraphics[width=4.5in]{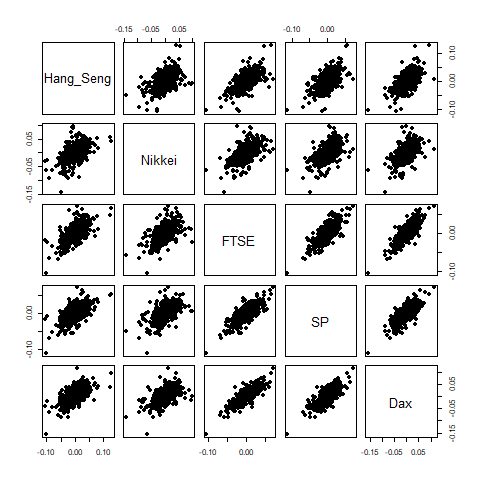}
\caption{Scatterplot matrix of residuals.}
\label{fig:scatter5}
\end{figure}

We pursue the same three objectives as in the previous section: (1) separate
between-vector dependence from within-vector dependence and marginal
features with what we shall call a \emph{vector copula}, (2) show uniqueness
and invariance of the vector copula to transformations of the multivariate
marginals that don't affect between-vector dependence, (3) develop new
families of parametric models to fit multivariate data.

In direct analogy with the bivariate case described in Section~\ref%
{sec:copulas}, we seek to transform~$Y_1$ and~$Y_2$ to uniform random
vectors on~$[0,1]^2$ and~$[0,1]^3$ respectively, in a way that preserves the
structure of dependence between~$Y_1$ and~$Y_2$. The notion of vector rank
proposed in \citet{CGHH:2017}, based on the theory of measure
transportation, has the desired properties, as we now explain. 

The probability integral transform turns a continuous random variable~$W$
into its rank~$F_W(W)$, i.e., a uniform random variable on~$[0,1]$ with the
same rank ordering of outcomes. It is the unique increasing transformation
that turns~$W$ into a uniform random variable on~$[0,1]$. The monotonicity
of the transformation is what preserves rank ordering. Similarly,
Proposition~\ref{prop:polar} below, a seminal result in the theory of
measure transportation in \citet{McCann:95}, states essential uniqueness of
the gradient of a convex function (hence cyclically monotone map)~$R_1$
(resp.~$R_2$) that turns absolutely continuous random vector~$Y_1$ (resp.~$%
Y_2$) to a uniform~$R_1(Y_1)$ on~$[0,1]^2$ (resp.~$R_2(Y_2)$ on~$[0,1]^3$).
See Appendix~\ref{app:OT} for a primer on optimal measure transportation,
cyclical monotonicity and vector ranks.

\begin{proposition}[\citet{McCann:95}]
\label{prop:polar} Let $P$ and $\mu $ be two distributions on $\mathbb{R}%
^{d} $. (1) If $\mu $ is absolutely continuous with respect to the Lebesgue
measure on~$\mathbb{R}^{d}$, with support contained in a convex set $%
\mathcal{U}$, the following holds: there exists a convex function $\psi :%
\mathcal{U}\rightarrow \mathbb{R}\cup \{+\infty \}$ such that $\nabla \psi
\#\mu =P$. The function $\nabla \psi $ exists and is unique, $\mu $-almost
everywhere. (2) If, in addition, $P$ is absolutely continuous on $\mathbb{R}%
^{d}$ with support contained in a convex set $\mathcal{Y}$, the following
holds: there exists a convex function~$\psi ^{\ast }:\mathcal{Y}\rightarrow 
\mathbb{R}\cup \{+\infty \}$ such that $\nabla \psi ^{\ast }\#P=\mu $. The
function $\nabla \psi ^{\ast }$ exists, is unique and equal to $\left(
\nabla \psi \right)^{-1}$, $P$-almost everywhere.
\end{proposition}

Proposition~\ref{prop:polar} is an extension of \citet{Brenier:91} (see also %
\citet{RR:90}). It removes the finite variance requirement, which is
undesirable in our context. Proposition~\ref{prop:polar} is the basis for
the definition of vector quantiles and ranks in \citet{CGHH:2017}. In our
context, it is applied with uniform reference measure.\footnote{%
This vector quantile notion was introduced in \citet{GH:2012} and %
\citet{EGH:2012} and called $\mu $-quantile.} In terms of our empirical
illustration, distribution~$P$ in Definition~\ref{def:MK} below stands for
the distribution of~$Y_1$ (resp.~$Y_2$),~$\mu$ stands for the uniform
distribution on~$[0,1]^2$ (resp.~$[0,1]^3$),~$d=2$ (resp.~$d=3$) and~$%
\nabla\psi^\ast$ is~$R_1$ (resp.~$R_2$).

\begin{definition}[Vector quantiles and ranks]
\label{def:MK} Let $\mu $ be the uniform distribution on~$[0,1]^{d}$, and
let~$P$ be a distribution on $\mathbb{R}^{d}$. The function~$Q:=\nabla\psi$
defined in Proposition~\ref{prop:polar} is called vector quantile associated
with~$P$, or associated with any random vector with distribution~$P$. When~$%
P $ is absolutely continuous, the function~$R:=\nabla\psi^\ast$ defined in
Proposition~\ref{prop:polar} is called vector rank associated with~$P$.
\end{definition}

In case~$d=1$, gradients of convex functions are nondecreasing functions,
hence vector quantiles and ranks of Definition~\ref{def:MK} reduce to
classical quantile and cumulative distribution functions. As the notation
indicates, the function~$\psi^\ast$ of Proposition~\ref{prop:polar} and
Definition~\ref{def:MK} is the convex conjugate of~$\psi$. Here we only
define vector ranks in the absolutely continuous case, and, hence,~$%
\nabla\psi^\ast=(\nabla\psi)^{-1}$, so that ranks and quantiles are inverses
of each other. In case of absolutely continuous distributions~$P$ on~$%
\mathbb{R}^d$ with finite variance, the vector rank function solves a
quadratic optimal transport problem, i.e., vector rank~$R$ minimizes, among
all functions~$T$ such that~$T(Y)$ is uniform on~$[0,1]^d$, the quantity~$%
\mathbb{E }\| Y-T(Y)\|^2$, where~$Y\sim P $. This property, which underlies
cyclical monotonicity of vector ranks and quantiles, as explained in
Appendix~\ref{app:OT}, also has important computational implications, most
notably in allowing the computation of empirical vector ranks using linear
programming.

Return to our illustrative bivariate data generating process~$Y_1$ for the
Hang Seng and Nikkei indices. The notion of vector rank~$R_1(Y_1)$ is best
illustrated with empirical vector ranks~$(\hat R_1(Y_{1i}))_{i=1}^n$. The
latter solve a discrete version of the optimal transport problem of the
previous paragraph. Let~$(u_i)_{i=1}^n$ be a set of regularly spaced points
on~$[0,1]^2$. Define the empirical vector ranks~$(\hat R_1(Y_{1i}))_{i=1}^n$
as a permutation~$(u_{\sigma(i)})_{i=1}^n$ of~$(u_i)_{i=1}^n$ that solves 
\begin{equation*}
\min_\sigma\sum_{i=1}^{n}\|Y_{1i}-u_{\sigma(i)}\|^2
\end{equation*}
among all permutations~$\sigma$ of~$\{1,\ldots,n\}$. The latter is a
discrete version of the optimal transport problem above, hence a linear
programming problem. It is also an \emph{assignment} problem, for which many
efficient algorithms exist in the literature, most notably the \emph{%
Hungarian algorithm} in \citet{Munkres:57} and the \emph{auction algorithm}
in \citet{Bertsekas:88}. Efficient ready-to-use implementations abound.


\subsection{Vector Copulas and Vector Sklar Theorem}

\label{sec:Sklar}

The vector ranks~$R_1$ and~$R_2$ are cyclically monotone transformations
that turn~$Y_1$ and~$Y_2$ into uniform random vectors~$R_1(Y_1)$ and~$%
R_2(Y_2)$ on~$[0,1]^2$ and~$[0,1]^3$ respectively. Hence they play the role
of the probability integral transform for random vectors. The resulting
5-variate random vector~$(R_1(Y_1),R_2(Y_2))$ has uniform bivariate and
trivariate marginals, so that the remaining dependence structure is purely
between-vector dependence. Figure~\ref{fig:rank5} shows the $5\times5$
scatterplot matrix for the sample of empirical ranks~$(\hat R_1(Y_{1i}),\hat
R_2(Y_{2i}))_{i=1}^n$. The off diagonal scatterplots within the diagonal $%
2\times2$ and $3\times3$ blocks look uniform, as expected, given that the
vector ranks~$R_1(Y_1)$ and~$R_2(Y_2)$ are uniform. The other off-diagonal
scatterplots illustrate between-vector dependence, as desired.

\begin{figure}[tbp]
\centering
\includegraphics[width=4.5in]{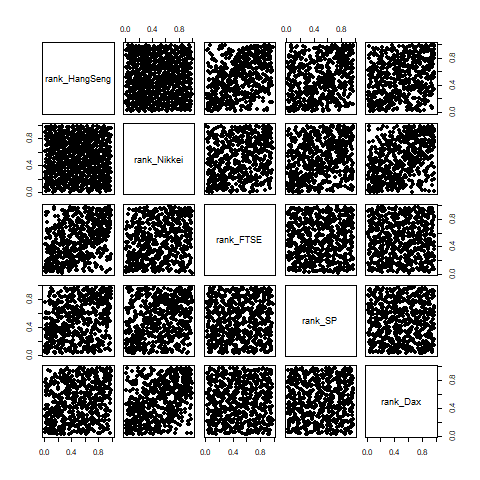}
\caption{Scatterplot matrix of empirical ranks.}
\label{fig:rank5}
\end{figure}

By construction, the distribution of the vector~$(R_1(Y_1),R_2(Y_2))$ is a
probability distribution on~$[0,1]^5$ with uniform multivariate marginals on~%
$[0,1]^2$ and~$[0,1]^3$ respectively. This is all we require to define a
vector copula.

\begin{definition}[Vector Copulas]
\label{def:VC}

A \emph{vector copula}~$C$ is a cumulative distribution function on~$\left[
0,1\right] ^{d}$ with uniform marginals~$\mu _{k}=U[0,1]^{d_k}$, $k\leq K$,
where $d=d_{1}+\ldots +d_{K}$. The associated probability measure~$P_{C} $
will also be referred to as vector copula, when there is no ambiguity.
\end{definition}

Definition~\ref{def:VC} implies that the class of vector copulas is a
subclass of copulas of dimension $d$ having the special feature that the $K$
non-overlapping multivariate marginals are uniform. When $d_{1}=\ldots
=d_{K}=1$, the class of vector copulas is the class of copulas of dimension~$%
K$.

Now that we have settled on a definition for vector copulas, we need to
argue that they characterize between-vector dependence of random vectors
with given multivariate marginals. To this end, we need to show that we can
always associate a vector copula with any random vector, that such a vector
copula is unique under suitable assumptions and that it is invariant to
suitably defined monotonic transformations of the multivariate marginals.
Existence and uniqueness are shown in the following multivariate version of
Sklar's Theorem. Invariance to suitably defined monotonic transformations of
the multivariate marginals will be shown in section~\ref{sec:inv}.

In terms of our empirical illustration, distribution~$P$ in the theorem
below stands for the distribution of vector~$(Y_1,Y_2)$,~$P_j$ stands for
the distribution of~$Y_j$,~$j=1,2$. The vector quantiles are~$%
Q_1=\nabla\psi_1$, and~$Q_2=\nabla\psi_2$, and the vector ranks are~$%
R_1=\nabla\psi_1^\ast$ and~$R_2=\nabla\psi_2^\ast$, ~$K=2$,~$d_1=2 $, and~$%
d_2=3$. Finally,~$C$ is the cumulative distribution function of~$%
(R_1(Y_1),R_2(Y_2))$.

\begin{theorem}[Vector Sklar Theorem]
\label{thm:Sklar} Let~$d_{1},\ldots ,d_{K}$ be a collection of integers and
let~$\mathcal{U}_k=[0,1]^{d_k}$ for each~$k\leq K$. Let~$P$ be any joint
distribution on~$\mathbb{R}^{d_{1}}\times \ldots \times \mathbb{R}^{d_{K}}$
with marginals~$P_{k}$ on~$\mathbb{R}^{d_{k}}$, and~$\psi _{k}$ be the
convex function such that~$\nabla \psi _{k}$ is the vector quantile
associated with~$P_{k}$, each~$k\leq K$. There exists a vector copula~$C$
such that the following properties hold.

\begin{enumerate}
\item There exists a distribution on $(\mathbb{R}^{d_{1}}\times\ldots\times%
\mathbb{R}^{d_{K}})\times(\mathcal{U}_{1}\times\ldots\times\mathcal{U}_{K})$
with margins~$P$ on $\mathbb{R}^{d_{1}}\times\ldots\times\mathbb{R}^{d_{K}}$%
,~$P_C$ on $\mathcal{U}_{1}\times\ldots\times\mathcal{U}_{K}$, and~$(%
\mbox{Id},\nabla\psi_{k})\#\mu_{k}$ on~$\mathcal{U}_{k}\times\mathbb{R}%
^{d_{k}}$, each~$k\leq K$.

\item For any collection~$(A_{1},\ldots ,A_{K})$, where~$A_{k}$ is a Borel
subset of~$\mathbb{R}^{d_{k}}$, $k\leq K$, 
\begin{equation}
P\left( A_{1}\times \ldots \times A_{K}\right) =P_{C}\left(
\partial\psi_{1}^{\ast }\left( A_{1}\right) \times \ldots \times
\partial\psi_{K}^{\ast }\left( A_{K}\right) \right).  \label{Sklar}
\end{equation}

\item If for each~$k\leq K$, $P_{k}$ is absolutely continuous on $\mathbb{R}%
^{d_{k}}$ with support in a convex set, then~$C$ is the unique vector
copula, such that for all Borel sets $B_{1},\ldots ,B_{K}$, in $\mathcal{U}%
_{1},\ldots ,\mathcal{U}_{K}$, 
\begin{equation}
P_{C}\left( B_{1}\times \ldots \times B_{K}\right) =P\left(
\nabla\psi_{1}\left( B_{1}\right) \times \ldots \times \nabla\psi_{K}\left(
B_{K}\right) \right) .  \label{VCopula}
\end{equation}

\item For any vector copula~$C$ defined in Definition 1 and any
distributions $P_{k}$ on~$\mathbb{R}^{d_{k}}$ with vector quantiles~$%
\nabla\psi_{k}$, each~$k\leq K$, (\ref{Sklar}) defines a distribution on $%
\mathbb{R}^{d_{1}}\times \ldots \times \mathbb{R}^{d_{K}}$\ with marginals~$%
P_{k}$, $k\leq K$.
\end{enumerate}
\end{theorem}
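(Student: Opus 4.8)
The plan is to build the key joint distribution on the product space and read off each of the four properties from it, relying on Proposition~\ref{prop:polar} (McCann) as the engine. First I would construct, for each $k\leq K$, the degenerate coupling $(\mathrm{Id},\nabla\psi_k)\#\mu_k$ on $\mathcal{U}_k\times\mathbb{R}^{d_k}$; since $\nabla\psi_k\#\mu_k=P_k$ by Proposition~\ref{prop:polar}(1), this is a coupling of $\mu_k$ and $P_k$ supported on the graph of the vector quantile map. The central construction is to glue the given distribution $P$ (with marginals $P_k$) to the copies of $\mu_k$ through these graphs. Concretely, I would define $P_C$ as the image of $P$ under the map $(y_1,\ldots,y_K)\mapsto(\nabla\psi_1^\ast(y_1),\ldots,\nabla\psi_K^\ast(y_K))$, i.e. apply the vector ranks coordinatewise, and verify that $P_C$ is a vector copula in the sense of Definition~\ref{def:VC}: its $k$-th marginal is $\nabla\psi_k^\ast\#P_k=\mu_k$ by Proposition~\ref{prop:polar}(2). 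This $P_C$, together with $P$ and the graph couplings, are the margins of a single joint law on the full product space, giving property~(1).

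For property~(2), the identity (\ref{Sklar}) should follow by a change-of-variables/pushforward computation: because $\nabla\psi_k^\ast$ pushes $P_k$ to $\mu_k$ and is (up to null sets) the inverse of $\nabla\psi_k$, the preimage under the coordinatewise rank map of a product set $B_1\times\cdots\times B_K$ is $\nabla\psi_1(B_1)\times\cdots\times\nabla\psi_K(B_K)$, so $P_C(B_1\times\cdots\times B_K)=P(\nabla\psi_1(B_1)\times\cdots)$. To phrase it directly in terms of the $A_k$ as in (\ref{Sklar}), I would take $B_k=\partial\psi_k^\ast(A_k)$ and show $\nabla\psi_k(\partial\psi_k^\ast(A_k))=A_k$ modulo $P_k$-null sets; here the subdifferential notation $\partial\psi_k^\ast$ rather than the gradient is what carries the argument through without absolute continuity of $P_k$, using the general duality $x\in\partial\psi^\ast(y)\iff y\in\partial\psi(x)$ from convex analysis (Appendix~\ref{app:OT}). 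Property~(3) is then the same identity under the extra hypothesis that each $P_k$ is absolutely continuous with convex support: Proposition~\ref{prop:polar}(2) now guarantees $\nabla\psi_k^\ast$ exists, is unique $P_k$-a.e., and equals $(\nabla\psi_k)^{-1}$, so the subdifferentials collapse to single-valued gradients and (\ref{VCopula}) holds; uniqueness of $C$ follows from the essential uniqueness of $\nabla\psi_k$ in Proposition~\ref{prop:polar} together with the fact that a measure on a product of the $\mathcal{U}_k$ is determined by its values on measurable rectangles.

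For property~(4) I would run the construction in reverse: given an arbitrary vector copula $C$ (equivalently $P_C$) and arbitrary marginals $P_k$ with quantiles $\nabla\psi_k$, define a set function on rectangles by the right-hand side of (\ref{Sklar}), $A_1\times\cdots\times A_K\mapsto P_C(\partial\psi_1^\ast(A_1)\times\cdots\times\partial\psi_K^\ast(A_K))$, and verify it extends to a genuine probability distribution whose $k$-th marginal is $P_k$. The marginal check amounts to setting all but one $A_j$ equal to the whole space $\mathbb{R}^{d_j}$, whereupon $\partial\psi_j^\ast(\mathbb{R}^{d_j})=\mathcal{U}_j$ up to a $\mu_j$-null set and the uniform marginal of $P_C$ integrates out, leaving $P_C(\partial\psi_k^\ast(A_k)\times\cdots)=\mu_k(\partial\psi_k^\ast(A_k))=P_k(A_k)$ by the pushforward $\nabla\psi_k\#\mu_k=P_k$.

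The main obstacle I anticipate is the careful handling of the set-valued maps and null sets when $P_k$ is not assumed absolutely continuous (parts~(1),(2),(4)). The gradient $\nabla\psi_k$ is only defined $\mu_k$-a.e. and its ``inverse'' is genuinely a subdifferential correspondence $\partial\psi_k^\ast$, so the identities $\nabla\psi_k(\partial\psi_k^\ast(A_k))=A_k$ and $\partial\psi_k^\ast(\nabla\psi_k(B_k))=B_k$ hold only up to the relevant null sets; making the pushforward bookkeeping rigorous requires invoking the convex-analytic duality between $\partial\psi_k$ and $\partial\psi_k^\ast$ and checking that the exceptional sets are negligible under the correct measure ($\mu_k$ on the $\mathcal{U}_k$ side, $P_k$ on the $\mathbb{R}^{d_k}$ side). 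Once the single joint distribution of part~(1) is in hand, parts~(2)--(4) are largely a matter of specializing and reversing the same pushforward identity, so the real work is concentrated in that construction and in the measure-theoretic justification that the correspondence-valued preimages behave like honest inverse images off a null set.
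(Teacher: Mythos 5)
There is a genuine gap in your construction for part~(1), and it propagates to part~(2). You define $P_C$ as the pushforward of $P$ under the coordinatewise rank map $(y_1,\ldots,y_K)\mapsto(\nabla\psi_1^\ast(y_1),\ldots,\nabla\psi_K^\ast(y_K))$. But parts~(1) and~(2) of the theorem are stated for \emph{arbitrary} marginals $P_k$, and Proposition~\ref{prop:polar}(2) only delivers the single-valued rank $\nabla\psi_k^\ast$ when $P_k$ is absolutely continuous. Worse, when $P_k$ has an atom, \emph{no} measurable map whatsoever pushes $P_k$ forward to $\mu_k$ (the image of an atom of mass $p$ would be an atom of $\mu_k$ of mass at least $p$), so no selection from the correspondence $\partial\psi_k^\ast$ can rescue the construction; the exceptional set is not negligible, it is the whole obstruction. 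You correctly flag ``set-valued maps and null sets'' as the anticipated difficulty, but the fix is not more careful bookkeeping with $\partial\psi_k^\ast$ --- it is to abandon the rank direction entirely and glue in the quantile direction. This is what the paper does: the graph couplings $(\mbox{Id},\nabla\psi_k)\#\mu_k$ always exist because the quantile $\nabla\psi_k$ exists for arbitrary $P_k$ by Proposition~\ref{prop:polar}(1), and the existence of a single joint law with margins $P$ and $(\mbox{Id},\nabla\psi_k)\#\mu_k$, $k\leq K$, is obtained from the Vorob'ev--Kellerer theorem on measures with given overlapping marginals (Proposition~\ref{prop:Kel}), after checking that the relevant collection of index sets is decomposable and that the overlapping margins are consistent. $P_C$ is then defined as the projection of that joint law onto $\mathcal{U}_1\times\ldots\times\mathcal{U}_K$; in particular it need not be a function of $P$ alone. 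You mention ``gluing'' but never supply the existence theorem that makes the glue, and that is the one nontrivial ingredient of part~(1).

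The rest of your argument is essentially the paper's. Part~(2) follows from the joint law of part~(1) by writing $Y_k=\nabla\psi_k(U_k)$ a.s.\ and using the duality $\{u:\nabla\psi_k(u)\in A_k\}=\partial\psi_k^\ast(A_k)$ up to a $\mu_k$-null set, exactly as you describe --- note the direction used is $U_k\mapsto Y_k$, which is always well defined, not its reverse. Part~(3) under absolute continuity goes through as you outline, and part~(4) is cleaner if phrased as the law of $(\nabla\psi_1(U_1),\ldots,\nabla\psi_K(U_K))$ for $(U_1,\ldots,U_K)\sim P_C$, which gives a genuine probability measure directly and avoids the Carath\'eodory extension from rectangles that your formulation would require.
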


When $d_{1}=\ldots =d_{K}=1$, Theorem~\ref{thm:Sklar} reduces to Sklar's
theorem. The vector Sklar theorem plays the same role as Sklar's Theorem for
multivariate marginals.

First, Part~(1) is the corollary of an important result in the theory of
probability measures with fixed overlapping multivariate marginals, based on
combinatorial arguments: Proposition~\ref{prop:Kel} in the appendix, due to %
\citet{Vorobev:62} and \citet{Kellerer:64}. In the illustrative case with
two multivariate marginals only, the result could be proven more directly
with an appeal to the theory of Markov chains. See for instance a result of
Ionescu Tulcea's in Chapter~V, Section~1 of \citet{Neveu:65}. The idea runs
as follows. Take a probability distribution~$P$ on~$\mathbb{R}^{d_1}\times%
\mathbb{R}^{d_2}$ (the distribution of vector~$(Y_1,Y_2)$). Assume its
multivariate marginals are absolutely continuous for simplicity. Given the
distribution~$P$ on~$\mathbb{R}^{d_1}\times\mathbb{R}^{d_2}$ and given the
degenerate probability distributions on~$[0,1]^{d_1}\times\mathbb{R}^{d_1}$
and on~$\mathbb{R}^{d_2}\times[0,1]^{d_2}$ (the distributions of~$%
(\nabla\psi^\ast_1(Y_1),Y_1)$ and~$(Y_2,\nabla\psi^\ast_2(Y_2))$
respectively), there exists a joint distribution on~$[0,1]^{d_1}\times%
\mathbb{R}^{d_1}\times\mathbb{R}^{d_2}\times[0,1]^{d_2}$ with the prescribed
marginals. Then, the latter's marginal on~$[0,1]^{d_1}\times[0,1]^{d_2}$ is
the required distribution~$P_C$ and its cumulative distribution function is
the copula~$C$.

Part~(1) of Theorem~\ref{thm:Sklar} shows existence of a vector copula
associated with any random vector. This motivates the following definition:

\begin{definition}[Vector copula associated with a random vector]
\label{def:VCa} A vector copula formally derived in Theorem~\ref{thm:Sklar}
from a distribution~$P$ with marginals~$P_{k}$ on~$\mathbb{R}^{d_{k}}$, $%
k\leq K$, will be called a \emph{vector copula associated with}~$P$, or 
\emph{vector copula associated with} a random vector with distribution~$P$.
\end{definition}

In our definition of vector copulas associated to a distribution~$P$, we
only rely on the subdifferential~$\partial\psi^\ast$, and the relation~(\ref%
{eq:subdiff}), for each of the multivariate marginals. However, uniqueness
of the vector copula is only guaranteed under absolute continuity of the
multivariate marginals, as is the case for (classical) copulas.

Second, Part~(2) shows that a vector copula associated with~$P$ does indeed
measure the between-dependence structure in~$P$. To see this, let~$%
Y=(Y_{1},\ldots ,Y_{K})$ be a random vector with distribution~$P$ and let
each~$Y_{k}$, $k\leq K$ follow the multivariate marginal distribution~$P_{k}$%
. For each~$k\leq K$, let~$\nabla \psi _{k}$ be the vector quantile
associated with~$P_{k}$. Suppose that for each~$k\leq K$, $P_{k}$ is
absolutely continuous on $\mathbb{R}^{d_{k}}$ with support in a convex set.
Then, from Definition~\ref{def:MK},~$\nabla \psi _{k}^{\ast }\#P_{k}=\mu
_{k} $\ for each~$k\leq K$. Since the reference measure~$\mu _{k}$ is an
independence measure for each~$k\leq K$, the (classical) copula function of~$%
\nabla \psi _{k}^{\ast }\left( Y_{k}\right) $ is the independence copula and
hence the joint distribution of~$\left( \nabla \psi _{1}^{\ast }\left(
Y_{1}\right) ,\ldots ,\nabla \psi _{K}^{\ast }\left( Y_{K}\right) \right) $,
i.e., the vector copula associated with~$P$, measures the between-dependence
structure in~$P$.

Third, Part~(3) of the vector Sklar theorem, or (\ref{VCopula}), provides a
general approach to computing vector copulas of multivariate distributions.
In fact, for absolutely continuous marginals~$P_{k}$ with density~$f_{k}$,
the \emph{Monge Amp\`{e}re Equation} (\ref{eq:MA3}) gives for each~$k\leq K$%
, 
\begin{equation*}
\det \left( D^2\psi_{k}\left( u_{k}\right) \right) =\left[ f_{k}\left(
\nabla\psi_{k}\left( u_{k}\right) \right) \right]^{-1}\text{ for almost
every }u_{k}\in \lbrack 0,1]^{d_{k}}.
\end{equation*}%
We therefore obtain the following expression for the vector copula density~$%
c $, in terms of the density~$f$ of distribution~$P$ (when it exists): 
\begin{eqnarray}
c\left( u_{1},\ldots ,u_{K}\right) &=&f\left( \nabla\psi_{1}\left(
u_{1}\right) ,\ldots ,\nabla\psi_{K}\left( u_{K}\right) \right)
\prod_{k=1}^{K}\det \left( D^2\psi_{k}\left( u_{k}\right) \right)  \notag \\
&=&f\left( \nabla\psi_{1}\left( u_{1}\right) ,\ldots ,\nabla\psi_{K}\left(
u_{K}\right) \right) \prod_{k=1}^{K}\left[ f_{k}\left( \nabla\psi_{k}\left(
u_{k}\right) \right) \right]^{-1}.  \label{VCopulaD}
\end{eqnarray}%
Expression~(\ref{VCopulaD}) extends the copula density in~(\ref{CopulaD}) to
multivariate marginals with the vector quantile~$\nabla\psi_{k}$ replacing~$%
F_{k}^{-1}$ in~(\ref{CopulaD}).

Finally, Part~(4) of the vector Sklar theorem provides a way of constructing
distributions with given non-overlapping marginal distributions of any
finite dimensions. Specifically, it states that for any distributions~$P_{k}$
on~$\mathbb{R}^{d_{k}}$ with vector quantiles~$\nabla\psi_{k}$, each~$k\leq
K $, 
\begin{equation*}
A_{1}\times \ldots \times A_{K}\mapsto P_{C}\left( \partial\psi_{1}^{\ast
}(A_{1})\times \ldots \times \partial\psi_{K}^{\ast }(A_{K})\right)
\end{equation*}%
defines a distribution $P$ on $\mathbb{R}^{d_{1}}\times \ldots \times 
\mathbb{R}^{d_{K}}$\ with marginals~$P_{k}$, $k\leq K$, where $C$ is any
vector copula. When $P_{C}$ and $P_{k}$ for each $k\leq K$ are absolutely
continuous, the density function $f$ associated with $P$ is given by 
\begin{equation}
f\left( y_{1},\ldots ,y_{K}\right) =c\left(
\nabla\psi_{1}^{\ast}(y_{1}),\ldots ,\nabla\psi_{K}^{\ast}(y_{K})\right)
\prod_{k=1}^{K}f_{k}\left( y_{k}\right) ,  \label{DensityDec}
\end{equation}%
which is a direct extension of the density decomposition of copula-based
density functions in the univariate case. The above expresses the
multivariate density function as the product of the copula density function
evaluated at the vector ranks and the density function of~$K$ independent
random vectors with marginals~$P_{1},\ldots ,P_{K}$. This can be used to
construct both MLE and two-step estimators of vector copula-based models in
exactly the same way as copula-based models, see \citet{FP:2014} and
references therein.


\subsection{Comonotonic Invariance}

\label{sec:inv}

Return to our bivariate illustration, where~$Y_1$ and~$Y_2$ are random
variables. The claim that the copula characterizes dependence relies on the
fact that it is not affected by transformations of the marginals that leave
rank ordering intact. In other words,~$(T_1(Y_1),T_2(Y_2))$ has the same
copula as~$(Y_1,Y_2)$ if~$T_1$ and~$T_2$ are increasing functions. Now~$%
\tilde Y_1:=T_1(Y_1) $ and~$Y_1$ are called \emph{comonotonic} when~$T_1$ is
increasing. Equivalently,~$Y_1$ and~$\tilde Y_1$ are comonotonic if they are
both increasing transformations of the same uniform random variable~$U$ on~$%
[0,1]$. In this case, letting~$U:=F_1(Y_1)$, we have indeed~$Y_1=F_1^{-1}(U)$
and~$\tilde Y_1=T_1(F_1^{-1}(U))$. So~$Y_1$ and~$\tilde Y_1$ are comonotonic
if and only if they have the same ranks. We call the copula \emph{%
comonotonic invariant} because vectors~$(Y_1,Y_2)$ and~$(\tilde Y_1,\tilde
Y_2)$ have the same copula when~$Y_1$ and~$\tilde Y_1$ are comonotonic, and~$%
Y_2$ and~$\tilde Y_2$ are comonotonic. This follows directly from the fact
that~$Y_1$ and~$\tilde Y_1$ (resp.~$Y_2$ and~$\tilde Y_2$) have identical
ranks.

Now, in the multivariate case, where~$Y_1$ and~$Y_2$ are random vectors, we
obtain the same invariance property relative to a suitable extension of the
notion of comonotonicity, where quantile functions~$F_{1}^{-1}$ and~$%
F_{2}^{-1}$ are replaced with vector quantiles~$Q_1$ and~$Q_2$. The
following definition is due to \citet{GH:2012} and \citet{EGH:2012}, where
it is called~$\mu$-comonotonicity.

\begin{definition}[Vector comonotonicity]
\label{def:CO} Random vectors~$Y_{1},\ldots ,Y_{J}$ on~$\mathbb{R}^{d}$ are
said to be \emph{comonotonic} if there exists a uniform random vector~$U$ on~%
$[0,1]^{d}$ such that~$Y_{j}=Q_{j}(U)$ almost surely, where~$Q_{j}$ is the
vector quantile of Definition~\ref{def:MK} associated with the distribution
of~$Y_{j}$, for each~$j\leq J$.
\end{definition}

A related notion, namely $c$-comonotonicity, was proposed by \citet{PS:2010}%
. According to their definition, two random vectors~$Y$ and~$\tilde Y$ are $%
c $-comonotonic if~$\tilde Y=\nabla\psi(Y)$ for some convex function~$\psi$. 
The two notions of multivariate comonotonicity both reduce to traditional
comonotonicity in the univariate case, but they differ in the multivariate
case. For instance, two Gaussian random vectors~$Y$ and~$\tilde
Y=\Sigma^{1/2}Y$ may not be comonotonic according to Definition~\ref{def:CO}%
, so that the invariance result does not apply to them.

We now state properties of vector copulas that relate to vector
comonotonicity. Since two vectors are comonotonic if they have identical
vector ranks, comonotonic invariance is indeed an invariance property of
vector copulas to transformations that leave ranks unchanged, as desired.

\begin{theorem}[Comonotonic invariance]
\label{thm:CI} 
Let random vectors~$(Y_{1},\ldots ,Y_{K})$ with distribution~$P$ and~$(%
\tilde{Y}_{1},\ldots ,\tilde{Y}_{K})$ with distribution~$\tilde{P}$ be such
that $Y_{k}$ and~$\tilde{Y}_{k}$ are comonotonic for each~$k$. Then,~$C$ is
a vector copula associated with~$P$ if and only if it is a vector copula
associated with~$\tilde{P}$.
\end{theorem}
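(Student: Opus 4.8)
The plan is to exploit the principle, emphasized in the discussion preceding the statement, that the vector copula associated with a distribution is the joint law of its vector ranks, and that comonotonicity is precisely the assertion that $Y_k$ and $\tilde Y_k$ share the same vector rank. Since comonotonicity leaves the vector ranks unchanged, it must leave their joint law, hence the vector copula, unchanged. By the symmetry of comonotonicity (Definition~\ref{def:CO}) it suffices to prove one implication, say that any vector copula $C$ associated with $P$ is also associated with $\tilde P$; the converse follows by interchanging the roles of $P$ and $\tilde P$.

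First I would treat the transparent case in which every marginal $P_k$ (equivalently $\tilde P_k$) is absolutely continuous with convex support, so that Part~(3) of Theorem~\ref{thm:Sklar} applies and the vector copula is unique. Comonotonicity of $Y_k$ and $\tilde Y_k$ yields a common uniform $U_k$ on $\mathcal{U}_k$ with $Y_k=\nabla\psi_k(U_k)$ and $\tilde Y_k=\nabla\tilde\psi_k(U_k)$ almost surely. Because $\nabla\psi_k^\ast=(\nabla\psi_k)^{-1}$ and $\nabla\tilde\psi_k^\ast=(\nabla\tilde\psi_k)^{-1}$ by Proposition~\ref{prop:polar}, we get $U_k=\nabla\psi_k^\ast(Y_k)=\nabla\tilde\psi_k^\ast(\tilde Y_k)$; that is, $Y_k$ and $\tilde Y_k$ have identical vector ranks and $\tilde Y_k=(\nabla\tilde\psi_k\circ\nabla\psi_k^\ast)(Y_k)$ is a deterministic, rank-preserving transform of $Y_k$. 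Applying these transforms jointly to $(Y_1,\ldots,Y_K)\sim P$ produces $(\tilde Y_1,\ldots,\tilde Y_K)\sim\tilde P$, and the two rank vectors $(\nabla\psi_1^\ast(Y_1),\ldots,\nabla\psi_K^\ast(Y_K))$ and $(\nabla\tilde\psi_1^\ast(\tilde Y_1),\ldots,\nabla\tilde\psi_K^\ast(\tilde Y_K))$ coincide almost surely. Hence they share one law, which by Part~(3) of Theorem~\ref{thm:Sklar}, i.e.\ (\ref{VCopula}), is simultaneously the unique vector copula of $P$ and of $\tilde P$, giving the equivalence in this case.

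For the general statement I would work directly with the coupling furnished by Part~(1) of Theorem~\ref{thm:Sklar} rather than with ranks as functions. If $C$ is associated with $P$, take the joint distribution on $(\mathbb{R}^{d_1}\times\cdots\times\mathbb{R}^{d_K})\times(\mathcal{U}_1\times\cdots\times\mathcal{U}_K)$ with margins $P$, $P_C$, and $(\mbox{Id},\nabla\psi_k)\#\mu_k$ on each $\mathcal{U}_k\times\mathbb{R}^{d_k}$; under it $Y_k=\nabla\psi_k(U_k)$ holds almost surely, so the latent uniform block $(U_1,\ldots,U_K)$ has law $P_C$. Replacing $Y_k$ by $\tilde Y_k:=\nabla\tilde\psi_k(U_k)$ leaves the $(U_1,\ldots,U_K)$-margin equal to $P_C$, gives the margin $(\mbox{Id},\nabla\tilde\psi_k)\#\mu_k$ on $\mathcal{U}_k\times\mathbb{R}^{d_k}$, and, by comonotonicity, produces the $\tilde P$-margin on the product of the $\mathbb{R}^{d_k}$. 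This new joint law is exactly a distribution of the type required by Part~(1) for $\tilde P$ and carries the same copula margin $P_C$; thus $C$ is a vector copula associated with $\tilde P$. Confirming compatibility with Part~(2), i.e.\ that this $P_C$ satisfies~(\ref{Sklar}) for $\tilde P$ with $\partial\tilde\psi_k^\ast$ in place of $\partial\psi_k^\ast$, reduces to the set identity relating $\partial\psi_k^\ast$ and $\partial\tilde\psi_k^\ast$ through their common domain $\mathcal{U}_k$.

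I expect the main obstacle to be the non-absolutely-continuous case, where the vector rank is only the set-valued subdifferential $\partial\psi_k^\ast$, the representation $\tilde Y_k=(\nabla\tilde\psi_k\circ\nabla\psi_k^\ast)(Y_k)$ need not define a genuine map, and the vector copula need not be unique. The delicate points there are to read the marginwise comonotonicity hypothesis as a single joint coupling of $(Y_1,\ldots,Y_K)$ and $(\tilde Y_1,\ldots,\tilde Y_K)$ through the same latent uniform vector, and to control the $\mu_k$-null sets on which $\nabla\psi_k$ and $\nabla\tilde\psi_k$ are undefined, so that the substitution of $\nabla\tilde\psi_k$ for $\nabla\psi_k$ preserves all three prescribed margins. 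The absolutely continuous argument of the second paragraph serves as the guide, with subdifferentials and measurable selections replacing the inverse maps.
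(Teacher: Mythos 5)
Your argument is correct and is essentially the paper's proof: the paper also takes the coupling $(Y_1,\ldots,Y_K,U_1,\ldots,U_K)$ furnished by Part~(1) of Theorem~\ref{thm:Sklar}, uses $Y_k=Q_k(U_k)$ $\mu_k$-a.s., invokes comonotonicity to write $\tilde Y_k=\tilde Q_k(U_k)$ $\mu_k$-a.s.\ with the \emph{same} latent uniform block, and observes that the resulting joint law satisfies the conditions characterizing a vector copula associated with $\tilde P$ while leaving the margin $P_C$ untouched. Your separate treatment of the absolutely continuous case and your closing remarks on subdifferentials are additional care the paper does not spell out, but the core mechanism coincides.
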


When $d_{k}=1$ for all $k\leq K$, comonotonic continuous random variables
are increasing transformations of each other and Theorem \ref{thm:CI}
reduces to the well-known invariance property of copulas. Theorem~\ref%
{thm:CI} shows the importance of deriving vector copulas to flexibly model
between-vector dependence. Indeed, existing parametric multivariate
distribution families cannot model between-vector dependence without
constraining the within vector dependence and marginals. Conversely, a
change in the within-vector dependence, such as replacing~$Y$ by~$\tilde{Y}%
=\Sigma ^{1/2}Y$, also changes the between-vector dependence as
characterized by the vector copula.
If~$\tilde{Y}:=a+bY$, where $a\in
R^{d}$ and~$b>0$ is a scalar, then~$Y$ and~$\tilde{Y}$ are 
comonotonic. Indeed, if~$\psi$ is a convex function, then~$a+b\nabla\psi$ is the gradient of a convex function. Hence, Theorem~\ref{thm:CI} applies to location transformations.


\subsubsection*{Antitone Transformations}

For two vectors, we can also entertain a notion of countermonotonicity as a
multivariate extension of decreasing transformations of two random variables.

\begin{definition}[Vector Countermonotonicity]
Two random vectors~$Y_{1},Y_{2}$ on~$\mathbb{R}^{d}$ are said to be \emph{%
countermonotonic} if there exists a uniform random vector~$U$ on~$[0,1]^d$
such that~$Y_{1}=Q_{1}(U)$ and $Y_{2}=Q_{2}(1_{d}-U)$ almost surely, where~$%
1_{d}$ is the vector of ones and $Q_{j}$ is the vector quantile of
Definition~\ref{def:MK} associated with the distribution of~$Y_{j}$, for
each~$j=1,2$.
\end{definition}

The lemma below shows that vector copulas for random vectors with
comonotonic and countermonotonic subvectors are related in simple,
predictable ways.

\begin{lemma}
\label{lemma:antitone} 
Let random vectors~$(Y_{1},\ldots ,Y_{K})$ with distribution~$P$ and~$(%
\tilde{Y}_{1},\ldots ,\tilde{Y}_{K})$ with distribution~$\tilde{P}$ be such
that $Y_{k}$ and~$\tilde{Y}_{k}$ are comonotonic for each~$k\leq K_{1}$ and $%
Y_{k}$ and~$\tilde{Y}_{k}$ are countermonotonic for each~$K_1<k\leq K$.
Then, the distribution of~$(U_{1},\ldots ,U_{K})$ is a vector copula
associated with~$P$ if and only if the distribution of~$(U_{1},\ldots
,U_{K_{1}},1_{d_{K_{1}+1}}-U_{K_{1}+1},\ldots ,1_{d_{K}}-U_{K})$ is a vector
copula associated with~$\tilde{P}$.
\end{lemma}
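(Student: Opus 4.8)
The plan is to reduce the statement to the behaviour of vector ranks under the two admissible couplings, absorbing every countermonotone block into a single measure-preserving reflection. Write $R_k:=\nabla\psi_k^\ast$ for the vector rank of $P_k$ and $\tilde R_k$ for the vector rank of the $k$-th marginal $\tilde P_k$ of $\tilde P$, let $Q_k=\nabla\psi_k$ and $\tilde Q_k=\nabla\tilde\psi_k$ be the corresponding vector quantiles, and set $\rho_k(u):=1_{d_k}-u$. Each $\rho_k$ is an involution of $[0,1]^{d_k}$ preserving the uniform measure $\mu_k$, and so is the block map $\Phi(u_1,\ldots,u_K):=(u_1,\ldots,u_{K_1},\rho_{K_1+1}(u_{K_1+1}),\ldots,\rho_K(u_K))$ on $\prod_{k}[0,1]^{d_k}$. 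The single identity I aim to prove is that the vector copula of $\tilde P$ equals the pushforward $\Phi\#P_C$ of the vector copula $P_C$ of $P$; since $\Phi\circ\Phi=\mathrm{Id}$, this one identity yields both directions of the claimed equivalence at once.

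The computational core is a rank identity holding along the hypothesized joint realisation of $(Y,\tilde Y)$, which I would first establish under absolute continuity, where $R_k=Q_k^{-1}$. For a comonotone pair ($k\le K_1$) the common uniform vector $V$ of Definition~\ref{def:CO} gives $Y_k=Q_k(V)$ and $\tilde Y_k=\tilde Q_k(V)$, whence $R_k(Y_k)=V=\tilde R_k(\tilde Y_k)$, which is exactly the mechanism underlying Theorem~\ref{thm:CI}. For a countermonotone pair ($K_1<k\le K$) the definition gives $Y_k=Q_k(V)$ and $\tilde Y_k=\tilde Q_k(1_{d_k}-V)$, so $R_k(Y_k)=V$ while $\tilde R_k(\tilde Y_k)=1_{d_k}-V=\rho_k\big(R_k(Y_k)\big)$. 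Assembling the two cases coordinate by coordinate yields, almost surely,
\[
\big(\tilde R_1(\tilde Y_1),\ldots,\tilde R_K(\tilde Y_K)\big)=\Phi\big(R_1(Y_1),\ldots,R_K(Y_K)\big).
\]
Passing to laws and recalling that the vector copula of $P$ is the law of $\big(R_1(Y_1),\ldots,R_K(Y_K)\big)$ while that of $\tilde P$ is the law of $\big(\tilde R_1(\tilde Y_1),\ldots,\tilde R_K(\tilde Y_K)\big)$ gives precisely $\Phi\#P_C$ as the vector copula of $\tilde P$.

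From this identity the equivalence is immediate in the absolutely continuous case: if $(U_1,\ldots,U_K)\sim P_C$ then $\Phi(U_1,\ldots,U_K)=(U_1,\ldots,U_{K_1},1_{d_{K_1+1}}-U_{K_1+1},\ldots,1_{d_K}-U_K)$ follows the vector copula of $\tilde P$, and the reverse follows by applying the involution $\Phi$ once more. The genuine difficulty is the general case, where $R_k$ is no longer single-valued and, by Part~(1) of Theorem~\ref{thm:Sklar}, a vector copula of $P$ is characterised only through a coupling with $(Y_1,\ldots,Y_K)\sim P$ in which each $U_k$ is uniform and $Q_k(U_k)=Y_k$, equivalently $U_k\in\partial\psi_k^\ast(Y_k)$ via the relation~(\ref{eq:subdiff}). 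Here I would recast the rank identity as one of subdifferential relations: for $k\le K_1$ comonotonicity furnishes a common uniform $U_k$ with $U_k\in\partial\psi_k^\ast(Y_k)$ and $U_k\in\partial\tilde\psi_k^\ast(\tilde Y_k)$, while for $K_1<k\le K$ countermonotonicity together with $\tilde Y_k=\tilde Q_k(1_{d_k}-U_k)$ gives $1_{d_k}-U_k\in\partial\tilde\psi_k^\ast(\tilde Y_k)$. The main obstacle is then to assemble these $K$ marginal relations into a single joint law on $\big(\prod_k\mathbb R^{d_k}\big)\times\big(\prod_k[0,1]^{d_k}\big)$ with the prescribed margins $\tilde P$ and $(\mathrm{Id},\nabla\tilde\psi_k)\#\mu_k$ --- exactly the simultaneous gluing carried out, via the Vorobev--Kellerer result (Proposition~\ref{prop:Kel}), in the proof of Theorem~\ref{thm:CI}. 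Once that gluing is in place it exhibits $\Phi\#C$ as a vector copula of $\tilde P$, and the involution $\Phi$ again supplies the converse, completing the proof.
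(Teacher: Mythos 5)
Your proof is correct and follows essentially the same route as the paper's: start from the coupling of Theorem~\ref{thm:Sklar}(1), use comonotonicity and countermonotonicity to write $\tilde{Y}_k=\tilde{Q}_k(U_k)$ or $\tilde{Y}_k=\tilde{Q}_k(1_{d_k}-U_k)$ with the same uniforms, and conclude that the reflected block $(U_1,\ldots,U_{K_1},1_{d_{K_1+1}}-U_{K_1+1},\ldots,1_{d_K}-U_K)$ furnishes a vector copula for $\tilde{P}$, the converse following because the blockwise reflection is a measure-preserving involution. Your additional care in the non--absolutely-continuous case (recasting the rank identity via subdifferentials and the Vorobev--Kellerer gluing) is a welcome elaboration of a step the paper leaves implicit, but it is not a different argument.
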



\subsection{Extremal Vector Copulas}

A first step towards modeling dependence with copulas is to model extremes.
We present vector copulas that characterize independence on the one end, and
maximal dependence on the other end.

\begin{enumerate}
\item The independence vector copula has distribution~$\mu_{1}\otimes\cdots%
\otimes\mu_{K}$, i.e., the uniform distribution on~$[0,1]^{d}$, with $%
d:=d_{1}+\ldots +d_{K}$.

\item When~$d_{1}=\ldots =d_{K}$, the comonotonic vector copula is the
vector copula with comonotonic multivariate marginals.

\item When~$K=2$ and~$d_1=d_2$, the countermonotonic vector copula is the
vector copula with countermonotonic multivariate marginals.
\end{enumerate}

We will denote~$C^{I}$, $\overline{C}$ and~$\underline{C}$ the independence,
comonotonic and countermonotonic vector copulas respectively.

\begin{lemma}[Comonotonic Vector Copula]
\label{lemma:CC} Let~$U$ be any uniform random vector on~$[0,1]^d$. Then,
the probability distribution associated with the comonotonic vector copula
is the distribution of~$(U,\ldots ,U)$. In addition, the probability
distribution associated with the countermonotonic vector copula is the
distribution of~$(U,1_{d}-U)$.
\end{lemma}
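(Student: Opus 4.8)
The plan is to reduce both claims to a single computation: the vector quantile of the uniform distribution on $[0,1]^d$ is the identity map. Once this is in hand, the definitions of vector comonotonicity and countermonotonicity translate directly into the two claimed distributional identities, and nothing else is needed.

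First I would identify the vector quantile of the uniform law. Fix $k\leq K$. Since $C$ is a vector copula, its $k$-th marginal is $\mu_k=U[0,1]^{d}$ by Definition~\ref{def:VC}; this plays the role of $P$ in Proposition~\ref{prop:polar}, with reference measure $\mu$ also uniform on $[0,1]^{d}$. The identity map is the gradient of the convex function $u\mapsto\|u\|^{2}/2$ and trivially satisfies $\mathrm{Id}\#\mu=\mu$. By the uniqueness clause of Proposition~\ref{prop:polar}(1), the vector quantile $Q_k=\nabla\psi_k$ associated with $\mu_k$ equals $\mathrm{Id}$, $\mu$-almost everywhere.

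For the comonotonic case, let $(U_1,\ldots,U_K)$ be distributed according to the comonotonic vector copula, so that the $U_k$ are comonotonic with uniform marginals. By Definition~\ref{def:CO} there is a single uniform vector $U$ on $[0,1]^{d}$ with $U_k=Q_k(U)$ almost surely for each $k$; substituting $Q_k=\mathrm{Id}$ gives $U_k=U$ almost surely, whence $(U_1,\ldots,U_K)\overset{d}{=}(U,\ldots,U)$. For the countermonotonic case ($K=2$, $d_1=d_2=d$), the definition of vector countermonotonicity furnishes a uniform $U$ with $U_1=Q_1(U)$ and $U_2=Q_2(1_d-U)$ almost surely; substituting $Q_1=Q_2=\mathrm{Id}$ yields $U_1=U$ and $U_2=1_d-U$, so $(U_1,U_2)\overset{d}{=}(U,1_d-U)$. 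A small check remains, namely that $(U,1_d-U)$ has uniform marginals and is thus a genuine vector copula: this is immediate, since $u\mapsto 1_d-u$ is a measure-preserving reflection of $[0,1]^{d}$, so $1_d-U$ is again uniform.

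The only step carrying any content is the identification $Q_k=\mathrm{Id}$, and even that is a one-line appeal to the essential uniqueness in Proposition~\ref{prop:polar}; everything else is substitution into the relevant definitions. I therefore expect no genuine obstacle. The one point deserving care is that comonotonicity (respectively countermonotonicity) is only defined for random vectors of equal dimension, so the standing hypothesis $d_1=\ldots=d_K$ (respectively $d_1=d_2$) is exactly what makes the comonotonic (respectively countermonotonic) multivariate marginals, and hence the extremal vector copulas, well-defined in the first place.
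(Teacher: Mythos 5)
Your proof is correct and follows essentially the same route as the paper's: both arguments hinge on the observation that, since each marginal of the extremal vector copula is uniform, the essential uniqueness in Proposition~\ref{prop:polar} forces the associated vector quantile to be the identity, after which the definitions of comonotonicity and countermonotonicity give $U_k=U$ and $U_2=1_d-U_1$ directly. Your version is marginally more explicit in exhibiting the identity as $\nabla(\|u\|^2/2)$ and in checking that $1_d-U$ is again uniform, but there is no substantive difference.
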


Hence, for any collection of Borel measurable subsets~$A_{k}\subset \lbrack
0,1]^{d}$, $k\leq K,$ the probability distribution associated with the
comonotonic vector copula takes values~$P_{\overline{C}}(A_{1}\times\ldots%
\times A_{K})=\mu \left( \cap _{k\leq K}A_{k}\right) $ and the
countermonotonic vector copula takes values~$P_{\underline{C}} (A_{1}\times
A_{2})=\mu \left( A_{1}\cap (1_{d}-A_{2})\right) .$ Letting $A_{k}=(0,u_{k}]$
with $u_{k}\in \lbrack 0,1]^{d}$ for each $k\leq K$, we obtain 
\begin{equation*}
~\overline{C}\left( u_{1},\ldots ,u_{K}\right) =\mu \left( (0,\min_{k\leq
K}u_{k}]\right) =\prod _{j=1}^{d}\left( \min_{k\leq K}u_{kj}\right) \text{.}
\label{UpperVC}
\end{equation*}%
When $d=1$, $\overline{C}\left( u_{1},\ldots ,u_{K}\right)
=\min\{u_1,\ldots,u_K\}$, the Fr\'{e}chet upper bound copula. However $%
\overline{C}$ differs from the Fr\'{e}chet upper bound when $d>1$. Since it
is a copula, $\overline{C}\left( u_{1},\ldots ,u_{K}\right) \leq \min_{j\leq
d,k\leq K}u_{kj}$ and strict inequality holds for some $\left( u_{1},\ldots
,u_{K}\right) \in \lbrack 0,1]^{dK}$. Similarly, when $d=1$, 
\begin{eqnarray*}
\underline{C}(u_{1},u_{2}) = \Pr \left( U\leq u_{1},1-U\leq u_{2}\right) =
\max \left( u_{1}+u_{2}-1,0\right)
\end{eqnarray*}%
which is the Fr\'{e}chet lower bound copula. When~$d>1$, $\underline{C}%
(u_{1},u_{2})$ is still a copula although the Fr\'{e}chet lower bound is not.

It follows from Theorem~\ref{thm:Sklar} that for any collection of
absolutely continuous distributions~$P_{k}$ on~$\mathbb{R}^{d}$ with vector
quantiles~$\nabla\psi_{k} $, each~$k\leq K$, the expression below defines a
distribution~$\overline{P}$ on~$\mathbb{R}^{d}\times \ldots \times \mathbb{R}%
^{d}$ with marginals~$P_{k}$, $k\leq K $: 
\begin{equation*}
\overline{P}\left( A_{1}\times \ldots \times A_{K}\right) =P_{\overline{C}%
}\left( \nabla\psi_{1}^{\ast}(A_{1})\times \ldots \times
\nabla\psi_{K}^{\ast}(A_{K})\right) .
\end{equation*}%
The distribution~$\overline{P}$ characterizes comonotonic random vectors
with absolutely continuous marginals~$P_{k}$, $k\leq K$. To show this, first
let $\left( Y_{1},\ldots ,Y_{K}\right) \sim \overline{P}$. Then $\left(
\nabla\psi_{1}^{\ast}(Y_{1}),\ldots ,\nabla\psi_{K}^{\ast}(Y_{K})\right)
\sim P_{\overline{C}}$ and hence there exists a uniform random vector~$U$ on~%
$[0,1]^d$ such that~$\nabla\psi_{k}^{\ast}(Y_{k})=U$ for all~$k\leq K$.
Conversely, let~$\left( Y_{1},\ldots ,Y_{K}\right) $ be comonotonic random
vectors with absolutely continuous marginals~$P_{k}$, $k\leq K$. By
definition, there exists a uniform random vector~$U$ on~$[0,1]^d$ such that $%
\nabla\psi_{k}^{\ast}(Y_{k})=U$ for all $k\leq K$. Hence 
\begin{eqnarray}
\Pr \left( Y_{1}\in A_{1},\ldots ,Y_{K}\in A_{k}\right) &=&\Pr \left(
\nabla\psi_{1}(U)\in A_{1},\ldots ,\nabla\psi_{K}(U)\in A_{K}\right)  \notag
\\
&=&\Pr \left( U\in \nabla\psi_{1}^{\ast}\left( A_{1}\right) ,\ldots ,U\in
\nabla\psi_{K}^{\ast}\left( A_{K}\right) \right)  \notag \\
&=&P_{\overline{C}}\left( \nabla\psi_{1}^{\ast}(A_{1})\times \ldots \times
\nabla\psi_{K}^{\ast}(A_{K})\right)  \notag \\
&=&\mu \left( \cap _{k\leq K}\nabla\psi_{k}^{\ast}(A_{k})\right) .
\label{UpperDis}
\end{eqnarray}%
Specifically for $d=1,A_{k}=(-\infty ,x_{k}]$ with $x_{k}\in \mathbb{R}$ for
each $k\leq K$, we have 
\begin{equation*}
F\left( x_{1},\ldots ,x_{K}\right) =\mu \left( \min_{1\leq k\leq
K}F_{k}\left( x_{k}\right) \right) =\min_{1\leq k\leq K}F_{k}\left(
x_{k}\right) ,
\end{equation*}%
which is the well-known Fr\'{e}chet upper bound distribution. However, it is
well known that for $d>1$, the Fr\'{e}chet upper bound $\min_{1\leq k\leq
K}F_{k}\left( x_{k}\right) $ is not a distribution function except for very
specific marginals, see Proposition 5.3 in \citet{Rusch:2010}. In sharp
contrast, the comonotonic vector copula $\overline{C}$ always defines a
distribution function for any marginals through (\ref{UpperDis}).


\subsection{Dependence modeling}

\label{sec:dm}

As noted above, the success of copula theory is partly due to the ability to
parsimoniously model dependence in random vectors with parametric copula
families, and to produce flexible new families of multivariate distributions
by fitting parametric copulas with any given marginals. We show that the
same objective can be achieved with vector copulas, thanks to the vector
Sklar Theorem.

Theorem~\ref{thm:Sklar} implies that given a vector copula~$C$, and given a
set of absolutely continuous multivariate marginal distributions~$P_{k}$ on~$%
\mathbb{R}^{d_{k}}$ with associated vector quantile~$\nabla\psi_{k}$, the
distribution~$P$ defined for Borel sets~$A_{1},\ldots ,A_{K}$, by 
\begin{equation*}
P\left( A_{1}\times \ldots \times A_{K}\right) =P_{C}\left(
\nabla\psi_{1}^{\ast}\left( A_{1}\right) \times \ldots \times
\nabla\psi_{K}^{\ast}\left( A_{K}\right) \right)
\end{equation*}%
is a multivariate distribution with vector copula~$C$ and non-overlapping
marginals~$P_{k}$. Furthermore, if vector copulas~$C$ admits density~$c$,~(%
\ref{DensityDec}) implies that 
\begin{equation*}
f\left( y_{1},\ldots ,y_{K}\right) =c\left(
\nabla\psi_{1}^{\ast}(y_{1}),\ldots ,\nabla\psi_{K}^{\ast}(y_{K}) \right)
\prod_{k=1}^{K} f_{k}\left( y_{k}\right)  \label{Meta-V}
\end{equation*}%
is the density of a multivariate distribution with vector copula~$C$ and
marginal distributions $P_{k}$ for $k\leq K$. Given a parametric vector
copula family~$\{C(.;\theta): \theta\in\Theta\}$, but without parameterizing
the marginal distributions, the above expression results in a semiparametric
multivariate distribution, with finite dimensional vector copula parameter~$%
\theta$ (which measures the between vector dependence) and infinite
dimensional marginal parameters~$f_{k}$, all~$k\leq K $. Combined with
traditional copula modeling of the multivariate marginals to further reduce
dimensionality, vector copulas developed here also allow a flexible
hierarchical approach to multivariate modeling.

Our first family of vector copulas is obtained using Theorem~\ref{thm:Sklar}
from a Gaussian vector, whose multivariate marginals are standard Gaussian
vectors.

\begin{example}[Gaussian Vector Copulas]
\label{ex:Gaussian}

Let~$d_{1},\ldots ,d_{K}$ be a collection of integers, and let 
\begin{equation*}
\Omega =\left( 
\begin{tabular}{llll}
$I_{d_{1}}$ & $\Omega _{12}$ & $\cdots $ & $\Omega _{1K}$ \\ 
$\Omega _{21}$ & $I_{d_{2}}$ & $\cdots $ & $\Omega _{2K}$ \\ 
$\vdots $ & $\vdots $ & $\ddots $ & $\vdots $ \\ 
$\Omega _{K1}$ & $\Omega _{K2}$ & $\cdots $ & $I_{d_{K}}$%
\end{tabular}%
\right) ,
\end{equation*}%
where $\Omega _{ij}$ is a non-degenerate correlation matrix of dimension $%
d_{i}\times d_{j}$ for $i,j=1,..,K$ and $i\neq j$. For $u_{k}\in \left[ 0,1%
\right] ^{d_{k}}$, $k\leq K$, let 
\begin{equation*}
C^{Ga}\left( u_{1},\ldots ,u_{K};\Omega \right) =\Phi _{d}\left( \nabla
\psi_{1}\left( u_{1}\right) ,\ldots ,\nabla \psi_{K}\left( u_{K}\right)
;\Omega \right) ,  \label{GaussianN}
\end{equation*}%
where $d=d_{1}+...+d_{K}$, $\Phi _{d}\left( \cdot ;\Omega \right) $ is the
distribution function of the multivariate normal with zero mean and variance
covariance matrix $\Omega $, and for each~$k\leq K$, 
\begin{eqnarray}  \label{eq:Phi-1}
\nabla \psi _{k}(u_{k})=\Phi^{-1}(u_k):=\left( \Phi ^{-1}(u_{k1}),\ldots
,\Phi ^{-1}(u_{kd_{k}})\right),
\end{eqnarray}
where~$u_k=(u_{k1},\ldots,u_{kd_k})$ and~$\Phi $ is the distribution
function of the standard normal distribution. For each~$k\leq K$, the map~$%
\nabla\psi_k$ is indeed the vector quantile of the multivariate standard
normal by Lemma~\ref{lemma:GS} in Appendix~\ref{app:OT}. The map~$C^{Ga}$ is
a vector copula by Definition \ref{def:VC}. Moreover when $d_{k}=1$ for all $%
k\leq K$,~$C^{Ga}$ reduces to the traditional Gaussian copula.

The following algorithm to simulate from a Gaussian vector copula
generalizes Algorithm~5.9 in \citet{NFE:2005} for the simulation of Gaussian
copulas.

Step 1. Perform a Cholesky decomposition of $\Omega $ to obtain the Cholesky
factor $\Omega ^{1/2}$;

Step 2. Generate a $d$-dimensional standard normal vector $Z$ and set $%
Y=\Omega ^{1/2}Z;$

Step 3. Vector~$U=\left( \Phi \left( Y_{1}\right) ,\ldots ,\Phi \left(
Y_{K}\right) \right)$, where~$Y=(Y_1,\ldots,Y_K),$ is distributed according
to the copula~$C^{Ga}\left( \cdot ;\Omega \right) $.
\end{example}

To illustrate the vector copula approach to multivariate modeling, we fit a
Gaussian vector copula to our 5-dimensional random vector~$(Y_1,Y_2)$ of
residuals. Assume~$(Y_1,Y_2)$ have Gaussian vector copula~$C^{\mbox{\scriptsize Ga}%
}(u_1,u_2;\Omega)$ as in Example~\ref{ex:Gaussian}. The covariance matrix~$%
\Omega$ is estimated from the empirical ranks~$(\hat R_1(Y_{1i}),\hat
R_2(Y_{2i}))_{i=1}^n$ using the method of moments and Figure~\ref%
{fig:simul2x3} shows a $2\times3$ off diagonal scatterplot matrix from a
sample of~$n$ data points simulated independently from the estimated
Gaussian vector copula. Figure~\ref{fig:scatter2x3} replicates the top right
off-diagonal block from the vector rank scatterplot matrix in Figure~\ref%
{fig:rank5}.

\begin{figure}[tbp]
\centering
\begin{subfigure}{.3\linewidth}
  \centering
  \hspace*{-.3in}
  \includegraphics[width=4in]{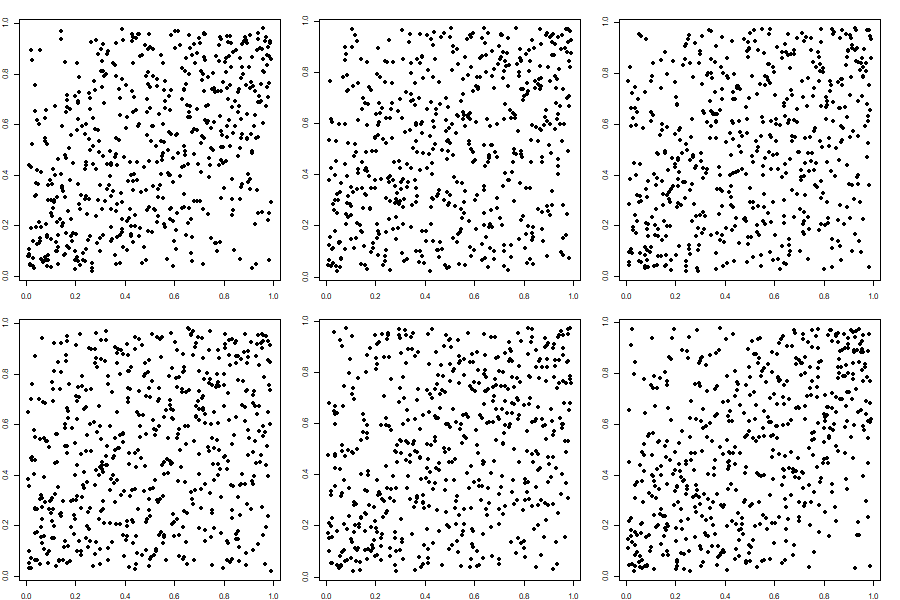} 
\caption{Empirical vector ranks}
 \label{fig:scatter2x3}
\end{subfigure} \newline
\vspace{0.4cm}%
\begin{subfigure}{.3\linewidth}
  \centering
  \hspace*{-1in}
 \includegraphics[width=4in]{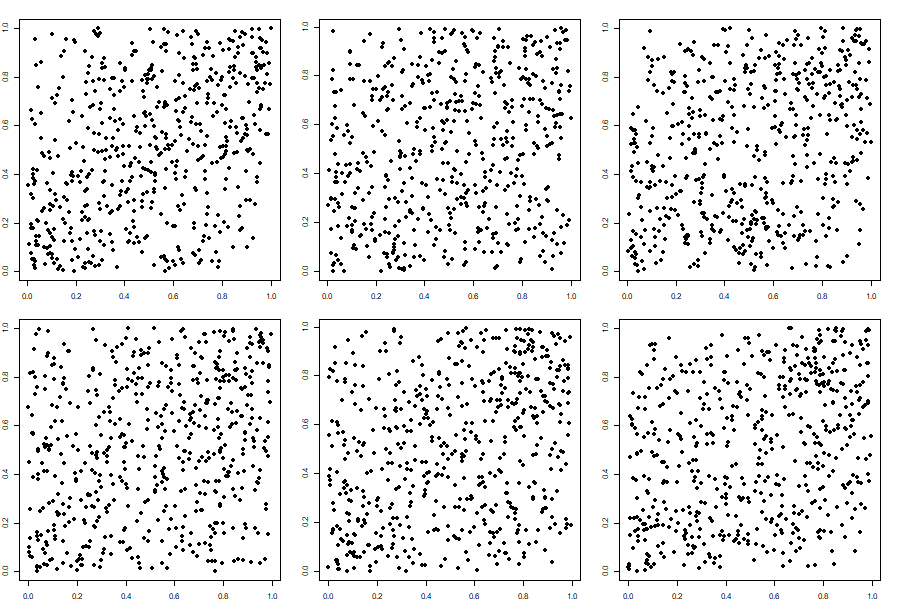} 
 \caption{Gaussian vector copula}
 \label{fig:simul2x3}
\end{subfigure}
\caption{Scatterplot of empirical vector ranks (rows are Nikkei and FTSE,
columns are Hang Seng, S\&P and DAX) and data simulated from a Gaussian
vector copula.}
\end{figure}



\section{Parametric Vector Copula Families}

\label{sec:param}

This section introduces two general classes of parametric vector copulas.
Similarly to Gaussian vector copulas constructed in the previous section,
the first class of vector copulas is constructed via the multivariate
analogue of the inversion method from elliptical distributions. However,
since vector ranks of general elliptical distributions do not have closed
form expressions, we introduce compositions of a finite number of McCann's
measure transport maps and call them \emph{composition measure transports}.
These maps will be used to construct parametric vector copulas from
elliptical distributions. The same principle can be extended to other
parametric families of distributions. The second class of vector copulas we
develop here, called \emph{Kendall vector copulas} is derived from a
stochastic representation of vector copulas in Proposition~\ref{SD-VC}.


\subsection{Composition Measure Transport}

We first introduce a class of maps that are compositions of gradients of
convex functions and push the uniform forward to arbitrary distributions.

\begin{definition}[Composition Measure Transport]
\label{def:CMK} Let $\mu $ be the uniform distribution on~$[0,1]^{d}$, and
let~$P$ be an arbitrary distribution on $\mathbb{R}^{d}$. A composition
measure transport from~$\mu$ to~$P$ is a map~$T: [0,1]^d\rightarrow\mathbb{R}%
^d$ satisfying the following properties.

\begin{enumerate}
\item The map~$T$ pushes~$\mu$ to~$P$, i.e.,~$T\#\mu =P$.

\item There exist l.s.c. convex functions~$\psi_1,\ldots,\psi_{L-1},\psi_L,$
for some~$L$, such that 
\begin{equation*}
T:=\nabla \psi _{L}\circ \nabla \psi _{L-1}\circ ...\circ \nabla \psi
_{1},\;\mu\mbox{-almost everywhere}.
\end{equation*}

\item If~$P$ is absolutely continuous with support in a convex set~$\mathcal{%
V}$ in~$\mathbb{R}^{d}$, then 
\begin{equation*}
T^\ast=\nabla \psi _{1}^{\ast }\circ \nabla \psi _{2}^{\ast }\circ ...\circ
\nabla \psi _{L}^{\ast }
\end{equation*}
exists, equals~$T^{-1}$, $P$-almost everywhere, and satisfies~$%
T^{\ast}\#P=\mu $.
\end{enumerate}
\end{definition}

Existence of composition measure transports (hereafter composition MT) is
guaranteed by Proposition~\ref{prop:polar}. When~$L=1$, the maps $T$ and $%
T^{\ast}$ in Definition~\ref{def:CMK} reduce to vector quantiles and ranks
of Definition~\ref{def:MK}. By allowing~$L$\ to be larger than~$1$, we are
able to choose convex functions $\psi _{l},$ $l\leq L$\ such that the
composition MT maps have explicit expressions. Composition MT maps are the
tools we use to map multivariate marginal distributions into multivariate
uniform distribution to remove all within vector dependence and marginal
information. This is achieved with the following proposition, whose proof
only requires a very minor variation on the proof of Theorem~\ref{thm:Sklar}.

\begin{proposition}
\label{Ethm:Sklar} For any joint distribution~$P$ on~$\mathbb{R}%
^{d_{1}}\times \ldots \times \mathbb{R}^{d_{K}}$ with absolutely continuous
marginals~$P_{k}$ on~$\mathbb{R}^{d_{k}}$ with supports in convex sets, and
any~composition MT maps~$T_{k}$, each~$k\leq K$, there exists a unique
vector copula~$C$ such that the following properties hold.

\begin{enumerate}
\item For any collection~$(A_{1},\ldots ,A_{K})$, where~$A_{k}$ is a Borel
subset of~$\mathbb{R}^{d_{k}}$, $k\leq K$, 
\begin{equation*}
P\left( A_{1}\times \ldots \times A_{K}\right) =P_{C}\left( T_{1}^{\ast
}\left( A_{1}\right) \times \ldots \times T_{K}^{\ast }\left( A_{K}\right)
\right).  \label{ESklar}
\end{equation*}

\item For all Borel sets $B_{1},\ldots ,B_{K}$, in $\mathcal{U}_{1},\ldots ,%
\mathcal{U}_{K}$, 
\begin{equation}
P_{C}\left( B_{1}\times \ldots \times B_{K}\right) =P\left( T_{1}\left(
B_{1}\right) \times \ldots \times T_{K}\left( B_{K}\right) \right) .
\label{CVCopula}
\end{equation}
\end{enumerate}
\end{proposition}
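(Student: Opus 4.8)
The plan is to recognize that this proposition is essentially Theorem~\ref{thm:Sklar} with the single-step vector quantiles and ranks $\nabla\psi_k$ and $\nabla\psi_k^\ast$ replaced by the composition MT maps $T_k$ and their inverses $T_k^\ast$. The author's own remark — that the proof "only requires a very minor variation on the proof of Theorem~\ref{thm:Sklar}" — tells us the architecture carries over verbatim, so my job is to isolate exactly which properties of $\nabla\psi_k$ were used and verify that $T_k$ inherits them. The crucial structural facts, supplied by Definition~\ref{def:CMK} under the stated hypothesis that each $P_k$ is absolutely continuous with support in a convex set, are: (i) $T_k\#\mu_k = P_k$; (ii) $T_k^\ast\#P_k = \mu_k$; and (iii) $T_k^\ast = T_k^{-1}$, $P_k$-almost everywhere. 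These are precisely the relations that the single maps $\nabla\psi_k$ and $\nabla\psi_k^\ast$ satisfied by Proposition~\ref{prop:polar}, so every step of the earlier argument applies mutatis mutandis.

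\medskip
\noindent First I would establish existence and the joint-coupling structure. Following Part~(1) of Theorem~\ref{thm:Sklar}, I invoke Proposition~\ref{prop:Kel} (Vorobev--Kellerer) to paste together, over $k\leq K$, the degenerate couplings $(\mbox{Id},T_k)\#\mu_k$ on $\mathcal{U}_k\times\mathbb{R}^{d_k}$ with the given joint law $P$ on $\mathbb{R}^{d_1}\times\cdots\times\mathbb{R}^{d_K}$; the overlapping-marginal consistency required by that proposition holds because $T_k\#\mu_k = P_k$ matches the $k$-th marginal of $P$. This yields a distribution on the product of all the $\mathbb{R}^{d_k}$ and $\mathcal{U}_k$, whose marginal on $\mathcal{U}_1\times\cdots\times\mathcal{U}_K$ is a probability measure with uniform marginals $\mu_k$ — hence a vector copula $C$ by Definition~\ref{def:VC}, with associated measure $P_C$.

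\medskip
\noindent Next I would derive the two displayed identities. For Part~(1), I take a product Borel set $A_1\times\cdots\times A_K$ and push the event $\{Y_k\in A_k,\ \forall k\}$ through the coupling. Because $T_k^\ast = T_k^{-1}$ $P_k$-a.e., the preimage $\{T_k^\ast(Y_k)\in \cdot\}$ translates the constraint $Y_k\in A_k$ into $U_k\in T_k^\ast(A_k)$ up to $P_k$-null sets, giving $P(A_1\times\cdots\times A_K) = P_C\bigl(T_1^\ast(A_1)\times\cdots\times T_K^\ast(A_K)\bigr)$. For Part~(2), I run the same computation in the opposite direction, applying $T_k$ to Borel sets $B_k\subset\mathcal{U}_k$ and using $T_k\#\mu_k=P_k$ together with $(T_k)^{-1}=T_k^\ast$ to obtain $P_C(B_1\times\cdots\times B_K)=P\bigl(T_1(B_1)\times\cdots\times T_K(B_K)\bigr)$. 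Uniqueness follows as in Part~(3) of Theorem~\ref{thm:Sklar}: the identity \eqref{CVCopula} pins down $P_C$ on all measurable rectangles in $\mathcal{U}_1\times\cdots\times\mathcal{U}_K$, and since such rectangles form a $\pi$-system generating the product $\sigma$-algebra, the measure $P_C$ — hence the vector copula $C$ — is determined uniquely.

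\medskip
\noindent The main obstacle, and the reason the variation is not entirely trivial, is that $T_k$ is a \emph{composition} $\nabla\psi_{k,L}\circ\cdots\circ\nabla\psi_{k,1}$ rather than a single gradient of a convex function, so it need not itself be cyclically monotone or the gradient of any convex potential. Consequently the clean inverse relation $(\nabla\psi_k)^{-1}=\nabla\psi_k^\ast$ that Proposition~\ref{prop:polar} guarantees directly must instead be reconstructed at the level of the composition. The point to verify carefully is that $T_k^\ast=\nabla\psi_{k,1}^\ast\circ\cdots\circ\nabla\psi_{k,L}^\ast$ really is a $P_k$-a.e. two-sided inverse of $T_k$ and pushes $P_k$ back to $\mu_k$ — but this is exactly what Definition~\ref{def:CMK}(3) asserts, since each factor inversion $(\nabla\psi_{k,l})^{-1}=\nabla\psi_{k,l}^\ast$ holds on the appropriate almost-sure set by Proposition~\ref{prop:polar} and the absolute-continuity hypothesis propagates through the finite composition. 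Once this inverse relation is in hand, the argument carries through exactly as for Theorem~\ref{thm:Sklar}, which is why no new coupling or measure-transport machinery is required beyond what Definition~\ref{def:CMK} already packages.
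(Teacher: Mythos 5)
Your proposal is correct and follows essentially the same route as the paper: the paper's own proof of Theorem~\ref{thm:Sklar}(1) is already written with generic transport maps $T_k$ in the Vorobev--Kellerer pasting step, and the identities and uniqueness follow exactly as in parts (2)--(3) of that proof once $T_k\#\mu_k=P_k$, $T_k^\ast\#P_k=\mu_k$ and $T_k^\ast=T_k^{-1}$ ($P_k$-a.e.) are substituted for the corresponding properties of $\nabla\psi_k$ and $\nabla\psi_k^\ast$. Your closing observation --- that the only point needing care is that the composed inverse $T_k^\ast$ is a genuine a.e.\ two-sided inverse, which Definition~\ref{def:CMK}(3) supplies under the absolute-continuity hypothesis --- is precisely the ``minor variation'' the authors allude to.
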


\begin{definition}
\label{def:VC2} The vector copula identified in Proposition~\ref{Ethm:Sklar}
using the composition MT maps~$T_{k}$, $k\leq K$, is called the $\left(
T_{1},...,T_{K}\right)$-vector copula derived from~$P$.
\end{definition}

When $T_{k}$ is the vector quantile of $P_{k}$ for each~$k\leq K$, the $%
\left( T_{1},...,T_{K}\right) $-vector copula derived from $P$ is the vector
copula associated with~$P$ of Definition~\ref{def:VCa}. When at least one of
the composition MT maps~$T_{k}$, $k\leq K$ is not a vector quantile, the $%
\left( T_{1},...,T_{K}\right) $-vector copula derived from~$P$ is not the
vector copula associated with~$P$, but can be used to construct multivariate
distributions~$P^{\prime }$ via part (4) of the vector Sklar Theorem \ref%
{thm:Sklar} such that the vector copula associated with~$P^{\prime }$ is the
prespecified $\left( T_{1},...,T_{K}\right) $-vector copula derived from~$P$%
. Since compositions of gradients of convex functions are in general not
gradients of convex functions themselves, a $\left( T_{1},...,T_{K}\right) $%
-vector copula derived from $P$ does not characterize rank dependence
between random vectors~$\left( Y_{1},...,Y_{K}\right)$ with joint
distribution~$P$, unless all the maps~$T_{1},...,T_{K}$ happen to be
gradients of convex functions.

Proposition \ref{Ethm:Sklar} presents a general approach to constructing
parametric families of vector copulas from parametric families of
multivariate distributions~$P$ with absolutely continuous marginals~$P_{k}$
on~$\mathbb{R}^{d_{k}}$, $k\leq K$. A critical step in this approach is to
derive composition MT map~$T_{k}$ of the marginal distribution~$P_{k}$, $%
k\leq K$ such that~$T_{k}$ has an explicit expression. Below we illustrate
this approach for general elliptical distributions to construct elliptical $%
\left( T_{1},...,T_{K}\right) $-vector copulas.


\subsection{Elliptical $\left( T_{1},...,T_{K}\right) $-Vector Copulas}

\label{sec:EVC}

We first present explicit expressions for composition MT maps of general
elliptical distributions and then present explicit expressions for $\left(
T_{1},...,T_{K}\right) $-vector copulas derived from elliptical
distributions.

\begin{definition}[Elliptical Distributions]
\label{def:ell} A (regular) elliptical distribution on~$\mathbb{R}^{d}$ is
the distribution of a random vector~$R\Sigma^{1/2}U^{(d)}$, where~$R\geq0$
is a radial random variable,~$\Sigma$ is a full rank~$d\times d$ scale
matrix, $U^{(d)}$ is uniform on the unit sphere~$\mathcal{S}^{d-1}$, and~$R$
and~$U^{(d)}$ are mutually independent.
\end{definition}

Examples of elliptical distributions include the following (see e.g.,
Chapter 3 of \citet{NFE:2005}):

\begin{enumerate}
\item \textit{The centered multivariate Gaussian distribution}~$N(0,\Sigma )$%
. It is the distribution of a random vector~$R\Sigma ^{1/2}U^{(d)}$, where~$%
R\sim\chi_{[d]}$,~$\Sigma $ is a full rank~$d\times d$ variance-covariance
matrix, ~$U^{(d)}$ is uniform on the unit sphere~$\mathcal{S}^{d-1}$, and~$R$
and~$U^{(d)}$ are mutually independent;

\item \textit{Student's t distribution}. The multivariate $t_{\nu ,\Sigma }$
with degrees of freedom~$\nu $ and scale~matrix $\Sigma $ is the
distribution of a random vector~$R\Sigma ^{1/2}U^{(d)}$, where~~$R\geq 0$, $%
R^{2}/d$ is a random variable with an~$F_{\nu ,d}$ distribution,~$U^{(d)}$
is uniform on the unit sphere~$\mathcal{S}^{d-1}$, and~$R$ and~$U^{(d)}$ are
mutually independent.
\end{enumerate}

\begin{definition}[Elliptical $\left( T_{1},...,T_{K}\right) $-Vector Copulas%
]
\label{def:EVC} The~$\left( T_{1},...,T_{K}\right) $-vector copula derived
from an elliptical distribution~$P$ on $\mathbb{R}^{d_{1}}\times \ldots
\times \mathbb{R}^{d_{K}}$ is called \emph{elliptical }$\left(
T_{1},...,T_{K}\right) $-\emph{vector copula derived from~$P$}. Proposition %
\ref{Ethm:Sklar} guarantees that the elliptical $\left(
T_{1},...,T_{K}\right) $-derived from a specific elliptical distribution~$P$
exists and is uniquely defined.
\end{definition}

Explicit expressions for elliptical $\left( T_{1},...,T_{K}\right) $-vector
copulas rely on explicit expressions for the composition MT maps $\left(
T_{1},...,T_{K}\right) $. To construct these maps, we rely on Lemma~\ref%
{lemma:radial-W2} in Appendix~\ref{app:OT}.

\begin{lemma}[Elliptical Composition MT]
\label{lemma:MK-ell} Let~$P$ be the elliptical distribution of the vector~$%
\tilde{R}\Sigma ^{1/2}U^{(d)}$, where~$\tilde{R}\geq0$ is a radial random
variable with absolutely continuous distribution,~$\Sigma $ is a full rank~$%
d\times d$ scale matrix, $U^{(d)}$ is uniform on the unit sphere~$\mathcal{S}%
^{d-1}$, independent of~$\tilde{R}$. The map~$T$ defined for every~$u$ in~$%
[0,1]^d$ by 
\begin{eqnarray}
T\left( u\;;\tilde R,\Sigma\right) =\frac{F_{\tilde{R}}^{-1}\circ
F_{R}\left( \left\Vert \Phi ^{-1}\left( u\right) \right\Vert \right) }{%
\left\Vert \Phi ^{-1}\left( u\right) \right\Vert }\Sigma ^{1/2}\Phi
^{-1}\left( u\right),  \label{eq:EVQ}
\end{eqnarray}
where~$R\sim\chi_{[d]}$, and~$\Phi^{-1}$ uses the componentwise notation of~(%
\ref{eq:Phi-1}), is a composition MT (with~$L=3$) that pushes the uniform~$%
\mu$ to~$P$.
\end{lemma}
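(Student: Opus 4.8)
The plan is to exhibit $T$ explicitly as a composition $\nabla\psi_3\circ\nabla\psi_2\circ\nabla\psi_1$ of three gradients of convex functions (so $L=3$), to track the successive push-forwards carrying $\mu$ through two intermediate laws to $P$, and finally to read off the inverse from Proposition~\ref{prop:polar}. Write $g:=F_{\tilde R}^{-1}\circ F_R$, which is nondecreasing and nonnegative since $F_{\tilde R}^{-1}$ and $F_R$ are nondecreasing and $\tilde R\geq0$. Define
$$\psi_1(u):=\sum_{i=1}^d\int_{1/2}^{u_i}\Phi^{-1}(s)\,ds,\qquad \psi_2(z):=\int_0^{\|z\|}g(r)\,dr,\qquad \psi_3(z):=\tfrac12\,z^{\top}\Sigma^{1/2}z,$$
with $\Sigma^{1/2}$ the symmetric positive definite square root of $\Sigma$. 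Each is convex: $\psi_1$ because $\Phi^{-1}$ is nondecreasing, $\psi_2$ because it is a nondecreasing convex function of $\|z\|$ (here I use that $g$ is nondecreasing and nonnegative), and $\psi_3$ because $\Sigma^{1/2}$ is positive definite. Their gradients are $\nabla\psi_1(u)=\Phi^{-1}(u)$ in the componentwise sense of~(\ref{eq:Phi-1}), $\nabla\psi_2(z)=g(\|z\|)\,z/\|z\|$, and $\nabla\psi_3(z)=\Sigma^{1/2}z$. Composing and cancelling the factor $\|\Phi^{-1}(u)\|$ recovers exactly~(\ref{eq:EVQ}), except at the single point $u=(1/2,\ldots,1/2)$ where $\Phi^{-1}(u)=0$; that point has $\mu$-measure zero, so the identity $T=\nabla\psi_3\circ\nabla\psi_2\circ\nabla\psi_1$ holds $\mu$-almost everywhere, as required by part~(2) of Definition~\ref{def:CMK}.

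Next I would track the push-forwards. By Lemma~\ref{lemma:GS}, $\nabla\psi_1\#\mu=N(0,I_d)$. Writing a standard Gaussian $Z\sim N(0,I_d)$ in polar form, its norm $\|Z\|$ has distribution $\chi_{[d]}$, i.e.\ CDF $F_R$, and is independent of the uniform-on-the-sphere direction $Z/\|Z\|$. Hence $\nabla\psi_2(Z)=g(\|Z\|)\,(Z/\|Z\|)$, and since $F_R(\|Z\|)$ is uniform on $[0,1]$ the radius $g(\|Z\|)=F_{\tilde R}^{-1}(F_R(\|Z\|))$ has the law of $\tilde R$ and stays independent of the direction; thus $\nabla\psi_2\#N(0,I_d)$ is the law of $\tilde R\,U^{(d)}$, the spherically symmetric elliptical distribution with radial part $\tilde R$ and scale $I_d$. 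The linear map $\nabla\psi_3$ then sends $\tilde R\,U^{(d)}$ to $\tilde R\,\Sigma^{1/2}U^{(d)}$, whose law is $P$, so chaining the three yields $T\#\mu=P$, which is property~(1) of Definition~\ref{def:CMK}. The crux of this step is the claim that the radial rearrangement $\nabla\psi_2$ is a genuine convex-gradient (optimal) transport between the two spherically symmetric laws; this is precisely the content of Lemma~\ref{lemma:radial-W2}, which I would invoke rather than reprove.

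For property~(3) I would first note that, since $\tilde R$ is absolutely continuous and $\Sigma$ has full rank, both the spherical law of $\tilde R\,U^{(d)}$ and $P$ are absolutely continuous with convex support (a ball or all of $\mathbb{R}^d$, carried to an ellipsoid by $\Sigma^{1/2}$). Applying Proposition~\ref{prop:polar}(2) at each stage then gives that every $\nabla\psi_l$ is invertible almost everywhere on the relevant law with $\nabla\psi_l^\ast=(\nabla\psi_l)^{-1}$, namely $\nabla\psi_1^\ast=\Phi$ componentwise, $\nabla\psi_2^\ast(z)=g^{-1}(\|z\|)\,z/\|z\|$ with $g^{-1}=F_R^{-1}\circ F_{\tilde R}$, and $\nabla\psi_3^\ast=\Sigma^{-1/2}$. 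Consequently $T^\ast=\nabla\psi_1^\ast\circ\nabla\psi_2^\ast\circ\nabla\psi_3^\ast$ exists, equals $T^{-1}$ $P$-almost everywhere, and satisfies $T^\ast\#P=\mu$ by reversing the push-forward chain, completing the verification that $T$ is a composition MT. The main obstacle throughout is the radial step: establishing that $\nabla\psi_2$ is simultaneously a convex gradient and the correct transport of $N(0,I_d)$ onto the spherical elliptical law, which is why the whole argument is organized around Lemma~\ref{lemma:radial-W2} and the polar decomposition of the Gaussian.
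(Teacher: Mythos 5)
Your proof is correct and follows essentially the same route as the paper's: the same factorization $T=\nabla\psi_3\circ\nabla\psi_2\circ\nabla\psi_1$, with Lemma~\ref{lemma:GS} handling the Gaussianization step and the polar decomposition of $N(0,I_d)$ handling the radial step. The only differences are refinements in your favor: you write down explicit convex potentials (and correctly flag that $\Sigma^{1/2}$ must be the symmetric square root for $\nabla\psi_3$ to be a convex gradient) where the paper simply cites Lemma~\ref{lemma:radial-W2}, and you also verify part~(3) of Definition~\ref{def:CMK}, which the paper's proof leaves implicit.
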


We now combine Lemma~\ref{lemma:MK-ell} and Proposition~\ref{Ethm:Sklar} to
characterize elliptical $\left( T_{1},...,T_{K}\right) $-vector copulas.

\begin{lemma}[Characterization of Elliptical $\left( T_{1},...,T_{K}\right) $%
-Vector Copulas]
\label{lemma:EVC} \mbox{}\vskip1pt\mbox{} The~$\left( T_{1},...,T_{K}\right) 
$-vector copula derived from the elliptical distribution~$P$ of a vector~$%
\tilde{R}\Sigma ^{1/2}U^{(d)}$ is characterized by~(\ref{CVCopula}) where~$%
T_k(\cdot):=T(\cdot\;;\tilde R,\Sigma_k)$, as in~(\ref{eq:EVQ}), and~$\Sigma
_{k}$ denotes the~$k$-th diagonal block of~$\Sigma $ for all~$k\leq K$.
\end{lemma}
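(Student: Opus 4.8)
The plan is to read Lemma~\ref{lemma:EVC} as a one-line specialization of Proposition~\ref{Ethm:Sklar}, with Lemma~\ref{lemma:MK-ell} supplying the explicit composition MT maps. Concretely, I would (i) check that $P$ meets the hypotheses of Proposition~\ref{Ethm:Sklar}, (ii) exhibit a closed-form composition MT map attached to each block-marginal $P_k$ and match it to $T(\cdot;\tilde R,\Sigma_k)$, and (iii) quote the conclusion~(\ref{CVCopula}).

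For step (i): since $P$ is a regular elliptical law with full-rank scale $\Sigma$ and absolutely continuous radial $\tilde R$, it is absolutely continuous on $\mathbb{R}^d$ and each block-marginal $P_k$ is absolutely continuous on $\mathbb{R}^{d_k}$ with support equal to a ball or all of $\mathbb{R}^{d_k}$, hence convex; so Proposition~\ref{Ethm:Sklar} applies and, for any composition MT maps attached to the $P_k$, delivers a unique vector copula $C$ obeying (\ref{ESklar}) and (\ref{CVCopula}). For step (ii): each $P_k$ is again elliptical with scale $\Sigma_k$ (the $k$-th diagonal block of $\Sigma$), so Lemma~\ref{lemma:MK-ell}, applied in dimension $d_k$ with the radial governing that block, furnishes a closed-form composition MT map $T_k$ with $T_k\#\mu_k=P_k$; the mechanism is that under $\mu_k$ one has $\Phi^{-1}(u)\sim N(0,I_{d_k})$, so $\Phi^{-1}(u)/\|\Phi^{-1}(u)\|$ is uniform on $\mathcal{S}^{d_k-1}$ and independent of $\|\Phi^{-1}(u)\|$, while the radial transform in (\ref{eq:EVQ}) reconstructs the radial factor and $\Sigma_k^{1/2}$ the scale. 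With the $T_k$ so identified, step (iii) is immediate: part~(2) of Proposition~\ref{Ethm:Sklar} reads $P_C(B_1\times\ldots\times B_K)=P(T_1(B_1)\times\ldots\times T_K(B_K))$, which is exactly (\ref{CVCopula}), and the uniqueness and uniform block-marginals of $C$ are inherited verbatim from the proposition.

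The step I expect to be the main obstacle is matching the map of step (ii) to the stated form $T_k=T(\cdot;\tilde R,\Sigma_k)$, i.e.\ verifying the pushforward identity $T(\cdot;\tilde R,\Sigma_k)\#\mu_k=P_k$. The delicate point is that marginalizing $\tilde R\Sigma^{1/2}U^{(d)}$ onto a $d_k$-dimensional block produces an elliptical law whose radial is, by the independence of the norm and direction of a subvector of a spherically symmetric vector, generically a scale mixture of $\tilde R$ rather than $\tilde R$ itself; one must therefore confirm that the pairing of $F_R$ (with $R\sim\chi_{[d_k]}$) and $F_{\tilde R}^{-1}$ in (\ref{eq:EVQ}) reproduces precisely this block radial, and that the latter is absolutely continuous so that Lemma~\ref{lemma:MK-ell} may legitimately be invoked block by block. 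Pinning down this radial reduction is the one place the argument requires genuine work beyond assembling the two prior results.
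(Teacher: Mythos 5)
Your proposal follows essentially the same route as the paper: its entire proof of Lemma~\ref{lemma:EVC} is the one sentence ``Noting that $P_{k}$ is the distribution of $\tilde{R}\Sigma_{k}^{1/2}U^{(d_{k})}$, the lemma follows from Theorem~\ref{thm:Sklar}(2) and Lemma~\ref{lemma:MK-ell},'' i.e.\ exactly your assembly of the block-wise composition MT maps with the (unique) vector copula of Proposition~\ref{Ethm:Sklar}. The one step you single out as requiring genuine work --- the radial reduction --- is precisely the assertion the paper makes without argument, and your caution is well founded: writing $U^{(d)}=Z/\|Z\|$ with $Z\sim N(0,I_d)$ shows that the $d_k$-dimensional block of $\tilde{R}\Sigma^{1/2}U^{(d)}$ is elliptical with scale $\Sigma_k$ but with radial $\tilde{R}_k\overset{d}{=}\tilde{R}\sqrt{B_k}$, where $B_k\sim\mathrm{Beta}(d_k/2,(d-d_k)/2)$ is independent of $\tilde{R}$, which coincides with $\tilde{R}$ in distribution only in special cases; so the notation $T(\cdot\,;\tilde{R},\Sigma_k)$ must be read with $\tilde{R}$ replaced by this block radial, exactly as the paper itself does implicitly in the Student's $t$ example, where the block maps use $Q_k$ with $Q_k^2/d_k\sim F_{d_k,\nu}$ rather than the $d$-dimensional radial. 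With that reading your argument is complete and is the paper's proof.
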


Explicit expressions for the composition MT maps $T_{k}$ of elliptical
distributions $P_{k}$ can be derived from Lemma \ref{lemma:MK-ell} and the
corresponding elliptical copulas can be obtained from Lemma~\ref{lemma:EVC}.
The latter also provides a generic procedure to simulate a random vector
distributed according to a prescribed elliptical $\left(
T_{1},...,T_{K}\right) $-vector copula:

Step 1. Generate a random vector~$Y=(Y_1,\ldots,Y_K)$ from the prescribed
elliptical distribution;

Step 2. Let~$U=(T_1^\ast(Y_1),\ldots,T_K^\ast(Y_K))$, where~$T_k$ is given
in Lemma~\ref{lemma:EVC}. Vector~$U$ is distributed according to the desired
elliptical $\left( T_{1},...,T_{K}\right) $-vector copula.

We present two examples of elliptical $\left( T_{1},...,T_{K}\right) $%
-vector copulas below.


\begin{example}[Gaussian $\left( T_{1},...,T_{K}\right) $-Vector Copulas]

For any integer $d$, a centered Gaussian distribution on~$\mathbb{R}^{d}$ is
the distribution of a random vector~$\widetilde{R}\Sigma ^{1/2}U^{(d)}$,
where $\widetilde{R}\sim \mathcal{\chi }_{\left[ d\right] }$, $U^{(d)}$ is
uniform on the unit sphere~$\mathcal{S}^{d-1}$, and~$\widetilde{R}$ and~$%
U^{(d)}$ are mutually independent. Thus, the Gaussian $\left(
T_{1},...,T_{K}\right) $-vector copula can be constructed using Lemma~\ref%
{lemma:EVC}. In the Gaussian case, the composition MT maps from Lemma~\ref%
{lemma:EVC} are~$T_{k}\left( u\right) =\Sigma _{k}^{1/2}\Phi ^{-1}\left(
u\right) $ and the Gaussian $\left( T_{1},...,T_{K}\right) $-vector copula
is the distribution function of $\left( T_{1}^{\ast }\left( Y_{1}\right)
,...,T_{K}^{\ast }\left( Y_{K}\right) \right) $, where $T_{k}^{\ast }\left(
z\right) :=\Phi \left( z\right) \Sigma _{k}^{1/2}$ for $z\in \mathbb{R}%
^{d_{k}}$ in which~$\Phi $ uses the componentwise notation of~(\ref{eq:Phi-1}%
) and~$\Sigma _{k}$ is the variance-covariance matrix of $Y_{k}$.

Since $\left( \Sigma _{1}^{-1/2}Y_{1},...,\Sigma _{K}^{-1/2}Y_{K}\right)
\sim \Phi _{d}\left( \cdot ;\Omega \right) $, where 
\begin{equation*}
\Omega =\left( 
\begin{tabular}{llll}
$I_{d_{1}}$ & $\Sigma _{1}^{-1/2}\Sigma _{12}\Sigma _{2}^{-1/2}$ & $\cdots $
& $\Sigma _{1}^{-1/2}\Sigma _{1K}\Sigma _{K}^{-1/2}$ \\ 
$\Sigma _{2}^{-1/2}\Sigma _{21}\Sigma _{1}^{-1/2}$ & $I_{d_{2}}$ & $\cdots $
& $\Sigma _{2}^{-1/2}\Sigma _{2K}\Sigma _{K}^{-1/2}$ \\ 
$\vdots $ & $\vdots $ & $\ddots $ & $\vdots $ \\ 
$\Sigma _{K}^{-1/2}\Sigma _{K1}\Sigma _{1}^{-1/2}$ & $\Sigma
_{K}^{-1/2}\Sigma _{K2}\Sigma _{2}^{-1/2}$ & $\cdots $ & $I_{d_{K}}$%
\end{tabular}%
\right) ,  \label{Omega}
\end{equation*}%
we obtain that 
\begin{equation*}
C^{Ga}\left( u_{1},\ldots ,u_{K};\Omega \right) =\Phi _{d}\left( \Phi
^{-1}\left( u_{1}\right) ,\ldots ,\Phi ^{-1}\left( u_{K}\right) ;\Omega
\right) .
\end{equation*}%
This is the Gaussian vector copula presented in Example~\ref{ex:Gaussian}.
For each $k\leq K$, the (classical) copula of~$\Sigma _{k}^{-1/2}Y_{k}$ is
the independence copula and the vector copula~$C^{Ga}$ measures the
between-dependence structure in 
\begin{equation*}
\left( \Sigma _{1}^{-1/2}Y_{1},\ldots ,\Sigma _{K}^{-1/2}Y_{K}\right).
\end{equation*}
However,~$Y_k$ and~$\Sigma _{k}^{-1/2}Y_{k}$ are not comonotonic unless~$%
\Sigma _{k}=\sigma _{k}^{2}I_{d_{k}}$ for a scalar~$\sigma _{k}^{2}>0$, and
the comonotonic invariance in Theorem~\ref{thm:CI} does not apply. Hence,
when~$\Sigma\ne\Omega$, the $\left( T_{1},...,T_{K}\right)$-vector copula
derived from~$N\left( 0,\Sigma \right)$ according to Definition~\ref{def:VC2}
may not be the vector copula associated with~$N\left( 0,\Sigma \right)$
according to Definition~\ref{def:VCa}.
\end{example}


\begin{example}[Student's t $\left( T_{1},...,T_{K}\right) $-vector copulas]


A zero mean Student's~$t$ distribution with~$\nu $ degrees of freedom and
scale matrix~$\Sigma $ on~$\mathbb{R}^{d}$ is characterized by~$Q\Sigma
^{1/2}U^{\left( d\right) }$, where~$Q\geq 0,$ $Q^{2}/d$ follows an~$F$
distribution with~$\left( d,\nu \right) $ degrees of freedom. For $k\leq K$,
let 
\begin{equation}
T_{k}\left( u_{k}\right) =\frac{F_{Q_{k}}^{-1}\circ F_{R_{k}}\left(
\left\Vert \Phi ^{-1}\left( u_{k}\right) \right\Vert \right) }{\left\Vert
\Phi ^{-1}\left( u_{k}\right) \right\Vert }\Sigma _{k}^{1/2}\Phi ^{-1}\left(
u_{k}\right) ,  \label{eq:Student}
\end{equation}%
where $Q_{k}\geq 0$, $Q_{k}^{2}/d_{k}\sim F_{d_{k},\nu }$ and~$R_{k}\sim 
\mathcal{X}_{\left[ d_{k}\right] }$. The $\left( T_{1},...,T_{K}\right) $%
-vector copula density derived from the centered Student's~$t$ distribution
with $\nu $ degrees of freedom and scale matrix $\Sigma $ on~$\mathbb{R}^{d}$%
, where $d=d_{1}+\ldots +d_{K}$, is 
\begin{equation*}
c^{t}\left( u_{1},\ldots ,u_{K};\Sigma ,\nu \right) =t_{d}\left( T_{1}\left(
u_{1}\right) ,\ldots ,T_{K}\left( u_{K}\right) ;\Sigma ,\nu \right)
\prod_{k=1}^{K}\left[ t_{d_{k}}\left( T_{k}\left( u_{k}\right) ;\Sigma
_{k},\nu \right) \right] ^{-1},
\end{equation*}%
where~$t_{d}\left( \cdot ;\Sigma ,\nu \right) $ denotes the density of
Student's $t$ on~$\mathbb{R}^{d}$ with scale matrix $\Sigma $ and degree of
freedom $\nu $. Although in general, Student's t $\left(
T_{1},...,T_{K}\right) $-vector copula is not the vector copula associated
with Student's t distribution, it is identical to the (classical) Student's
t copula, when~$d_{k}=1$ for each~$k\leq K$, as shown in Appendix~\ref%
{app:stud}.

Finally, we provide an algorithm for simulation of Student's $t$ vector
copula. It generalizes Algorithm 5.10 in \citet{NFE:2005} for simulation of
Student's $t$ copulas.

Step 1. Generate $Z\sim N_{d}\left( 0,\Sigma \right) $;

Step 2. Generate a variable $W\sim Ig\left( \frac{\nu }{2},\frac{\nu }{2}%
\right) $ independently and let $Y=\sqrt{W}Z;$

Step 3. The random vector~$U=\left( T_{1}^{-1}\left( Y_{1}\right) ,\ldots
,T_{K}^{-1 }\left( Y_{K}\right) \right) $, where~$T_k$ is given in~(\ref%
{eq:Student}), and~$Y=(Y_1,\ldots,Y_K)$, follows distribution~$C^{t}\left(
\cdot\;;\Sigma ,\nu \right) $.
\end{example}


\subsection{Kendall Vector Copulas}

\label{sec:Kendall}

A second method to construct parametric vector copulas, inspired by
hierarchical Kendall copulas in \citet{Brechmann:2014}, is to make use of
the stochastic representation of a vector copula we establish below. We
illustrate this method by introducing a new class of parametric vector
copulas. It turns out to be the subclass of hierarchical Kendall copulas
with independence cluster copulas. We thus call them Kendall vector copulas.

\begin{proposition}[Stochastic Representation of Vector Copula]
\label{SD-VC} Let $U:=$ $\left( U_{1},...,U_{K}\right) $ be a random vector
of dimension~$d:=d_{1}+...+d_{K}$ with each $U_{k}\sim \mu _{k}$ for $k\leq
K $. Then 
\begin{equation*}
U\overset{d}{=}\left( \exp \left( R_{1}U_{1}^{\left( d_{1}\right) }\right)
,...,\exp \left( R_{K}U_{K}^{\left( d_{K}\right) }\right) \right) ,
\end{equation*}%
where for $k\leq K,$ $U_{k}^{\left( d_{k}\right) }$ is uniform on the unit
simplex on $\mathbb{R}^{d_{k}}$ independent of the random variable $R_{k}$
with distribution function%
\begin{equation}
F_{R_{k}}\left( x\right) =\exp \left( -x\right) \sum_{j=0}^{d_{k}-1}\frac{%
x^{j}}{j!}\text{ for }x\in (-\infty ,0].  \label{RDistribution}
\end{equation}
\end{proposition}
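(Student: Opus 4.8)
The plan is to reduce the statement to a single classical identity relating the gamma and Dirichlet laws, read through a logarithmic change of coordinates. Because the latent variables $(R_k,U_k^{(d_k)})$ enter the right-hand side block by block, and because a distribution on $[0,1]^d$ is a vector copula exactly when each of its $K$ multivariate marginals is uniform (Definition~\ref{def:VC}), it suffices to analyze one block, then let the joint law of the latent variables across blocks carry the between-block dependence. So I would fix $k$, drop the subscript, set $d=d_k$, and work with the componentwise map $w\mapsto -\log w$, a bijection from $[0,1]^d$ onto $[0,\infty)^d$.

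First I would record the elementary fact that $W$ is uniform on $[0,1]^d$ if and only if $T:=-\log W$ (componentwise) has independent standard exponential coordinates, which is just the scalar identity ``$W\sim\mathrm{Unif}(0,1)$ implies $-\log W\sim\mathrm{Exp}(1)$'' applied in each coordinate. Under this same transform the candidate block $\exp(R\,U^{(d)})$ becomes $T=-R\,U^{(d)}=S\,V$, where I put $S:=-R\ge 0$ and $V:=U^{(d)}$. Differentiating the stated $F_{R}$ identifies the law of $S=-R$ as Erlang, i.e.\ $\mathrm{Gamma}(d,1)$ with density $s^{d-1}e^{-s}/(d-1)!$ on $[0,\infty)$, while $V$ is uniform on the unit simplex $\Delta^{d-1}$ and independent of $S$. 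Thus the whole proposition collapses to the claim that $S\,V=(SV_1,\dots,SV_d)$ has independent $\mathrm{Exp}(1)$ coordinates.

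That claim is the gamma--Dirichlet (R\'enyi-type) representation, and this is where the real work lies. I would prove it by the change of variables $(T_1,\dots,T_d)\mapsto(S,V_1,\dots,V_{d-1})$ with $S=\sum_i T_i$, $V_i=T_i/S$, and $V_d=1-\sum_{i<d}V_i$ fixed by the simplex constraint. Starting from the product density $\prod_i e^{-t_i}=e^{-S}$ of independent exponentials, the Jacobian of this map equals $S^{d-1}$, so the joint density of $(S,V)$ factors as $\big(s^{d-1}e^{-s}/(d-1)!\big)\cdot(d-1)!$, that is, the $\mathrm{Gamma}(d,1)$ density in $s$ times the constant density $(d-1)!$ of the uniform law on the $(d-1)$-dimensional simplex. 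This shows at once that $S\sim\mathrm{Gamma}(d,1)$, that $V$ is uniform on $\Delta^{d-1}$, that $S$ and $V$ are independent, and---reading the bijection backwards---that $S\,V$ reconstructs independent exponentials. Translating back through $W=\exp(-T)$ then gives $\exp(R\,U^{(d)})\sim\mu$ for the block.

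Finally I would assemble the blocks. Since the coordinatewise map $w\mapsto\big(\sum_i(-\log w_i),\,(-\log w)/\sum_i(-\log w_i)\big)$ is an a.e.\ measurable bijection between $[0,1]^{d_k}$ and $[0,\infty)\times\Delta^{d_k-1}$, the block identity just proved pushes $\mu_k$ to $\mathrm{Gamma}(d_k,1)\otimes\mathrm{Unif}(\Delta^{d_k-1})$ and back. Applying the inverse map block by block to a given $U$ with uniform marginals defines latent variables $(R_k,U_k^{(d_k)})$ with exactly the prescribed within-block laws and independence, while the between-block dependence of $U$ is inherited by the joint law of these latent variables; substituting them into the right-hand side returns $U$ itself, which yields the asserted equality $U\overset{d}{=}\big(\exp(R_1U_1^{(d_1)}),\dots,\exp(R_KU_K^{(d_K)})\big)$. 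The main obstacle is the gamma--Dirichlet step---obtaining the $S^{d-1}$ Jacobian and the simplex normalization while correctly handling the single dependent coordinate $V_d$---together with reconciling the sign convention so that $R_k=-S_k\le 0$ is consistent with the distribution function $F_{R_k}$ as written.
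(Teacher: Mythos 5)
Your proof is correct, but it takes a more self-contained route than the paper. The paper's own proof is essentially a citation: it observes that the distribution of each block $U_k$ is the $d_k$-dimensional independence copula, which is Archimedean with generator $\exp(-u)$, and then invokes Theorem~3.1(ii) and Example~3.2 of the McNeil--Ne\v{s}lehov\'a theory of Archimedean copulas and $\ell_1$-norm symmetric distributions to obtain $(-\ln U_{k1},\ldots,-\ln U_{kd_k})\overset{d}{=}-R_kU_k^{(d_k)}$ directly, after which it inverts. You instead prove the underlying fact from scratch: the gamma--Dirichlet (R\'enyi) representation, obtained by the change of variables $(T_1,\ldots,T_d)\mapsto(S,V_1,\ldots,V_{d-1})$ with Jacobian $S^{d-1}$, showing that $S=\sum_iT_i\sim\mathrm{Gamma}(d,1)$ is independent of $V=T/S$ uniform on the simplex and that the map is invertible a.e. This is exactly the content of the cited theorem specialized to the independence generator, so the mathematical substance coincides; what your version buys is an elementary, verifiable argument in place of an external reference, plus the slightly stronger conclusion that the latent variables can be realized as deterministic functions of $U$ block by block (so the identity holds almost surely for that coupling, not merely in distribution), with between-block dependence automatically inherited. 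You are also right to flag the sign convention: as literally written, $F_{R_k}(x)=e^{-x}\sum_{j=0}^{d_k-1}x^j/j!$ exceeds $1$ on $(-\infty,0)$ and is not a distribution function; the intended formula is $F_{R_k}(x)=e^{x}\sum_{j=0}^{d_k-1}(-x)^j/j!$, i.e.\ $-R_k\sim\mathrm{Gamma}(d_k,1)$, which is precisely the law your Jacobian computation produces and the one the construction requires.
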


\begin{definition}[Kendall Vector Copulas]
\label{KVC} We call a \emph{Kendall vector copula with nesting copula} $%
C_{0}:\left[ 0,1\right] ^{K}\rightarrow \left[ 0,1\right] $ the distribution
of~$U$ in Proposition~\ref{SD-VC}, in the case where~$U_{1}^{\left(
d_{1}\right) },...,U_{K}^{\left( d_{K}\right) }$ are mutually independent, $%
(U_{1}^{\left( d_{1}\right) },...,U_{K}^{\left( d_{K}\right) })$ is
independent of~$(R_{1},...,R_{K})$, and the (classical) copula of $%
(R_{1},...,R_{K})$ is $C_{0}$.
\end{definition}

Since $U_{1}^{\left( d_{1}\right) },...,U_{K}^{\left( d_{K}\right) }$ are
mutually independent, the dependence structure in a Kendall vector copula
denoted as $C_{KV}$ is characterized by the nesting copula $C_{0}$. We show
in Appendix~\ref{app:Ken-Closed} that the class of Kendall vector copulas
defined in Definition \ref{KVC} is the subclass of hierarchical Kendall
copulas introduced in \citet{Brechmann:2014} with independence cluster
copulas. When $C_{0}$ is absolutely continuous, this allows us to derive the
following closed form expression for the density of Kendall vector copulas: 
\begin{eqnarray}
c_{KV}\left( u_{1},...,u_{K}\right) &=&c_{0}\left( F_{R_{1}}\left( \ln \left[
\Pi _{j=1}^{d_{1}}u_{1j}\right] \right) ,...,F_{R_{K}}\left( \ln \left[ \Pi
_{j=1}^{d_{K}}u_{Kj}\right] \right) \right)  \notag \\
&&\hskip100pt\times \Pi _{k=1}^{K}\left( \Pi _{j=1}^{d_{k}}u_{kj}\right) ,
\label{KVC-density}
\end{eqnarray}%
where $c_{0}$ is the density of $C_{0}$.

A special class of Kendall vector copulas is obtained when the nesting
copula $C_{0}$ is Archimedean (see Appendix~\ref{app:arch}). By choosing
Clayton, Gumbel, and Frank copula generators, we obtain different classes of
Kendall vector copulas with different between vector dependence structure.

Finally we present an algorithm to simulate from a Kendall vector copula
with nesting copula $C_{0}$ based on the stochastic representation of a
Kendall vector copula in Proposition~\ref{SD-VC}. It is similar to
Algorithms 14 and 20 in \citet{Brechmann:2014}.

Step 1. Generate~$\left( V_{1},...,V_{K}\right) $ from~$C_{0}.$

Step 2. Let~$R_{k}=F_{R_{k}}^{-1}\left( V_{k}\right) $ for each~$k\leq K$.

Step 3. Generate mutually independent~$U_{k}^{\left( d_{k}\right)}$ from the
uniform distribution on the unit simplex on~$\mathbb{R}^{d_{k}}$ for~$k\leq
K $.

Step 4. Let~$U_{k}=\left( U_{k1},\ldots ,U_{kd_{k}}\right) $, with $%
U_{kj}=\exp \left( R_{k}U_{kj}^{\left( d_{k}\right) }\right) $ for~$%
j=1,...,d_{k}$ and~$k\leq K$. Then~$U=\left( U_{1},...,U_{K}\right) $
follows the Kendall vector copula~$C_{KV}$ with nesting copula~$C_{0}.$

To illustrate the Kendall vector copula construction and its possible uses,
we fit Kendall copulas with Clayton, Frank, Gaussian or Gumbel nesting
copulas and investigate the effect of the financial crisis on between-vector
dependence in our vector of five international stock indices. As before,~$%
R_{1}(Y_{1})$ and~$R_{2}(Y_{2})$ are the population vector ranks of~$Y_{1}$
(Hang Seng and Nikkei) and~$Y_{2}$ (FTSE, S\&P and DAX) respectively. Assume 
$(Y_{1},Y_{2})$ have Kendall vector copula given in (\ref{KVC-density}). The
nesting copula $C_{0}$, i.e., the copula of~$(R_{1},R_{2})$, is either
Clayton, Frank, Gaussian or Gumbel. Formulas for the latter are given in
Appendix~\ref{app:cop-list} for convenience. Estimation is straightforward,
as the densities of the Kendall copula with Clayton, Frank, Gaussian, and
Gumbel nesting copulas are given in closed form. 
\begin{figure}[tbp]
\centering
\includegraphics[width=1.7in]{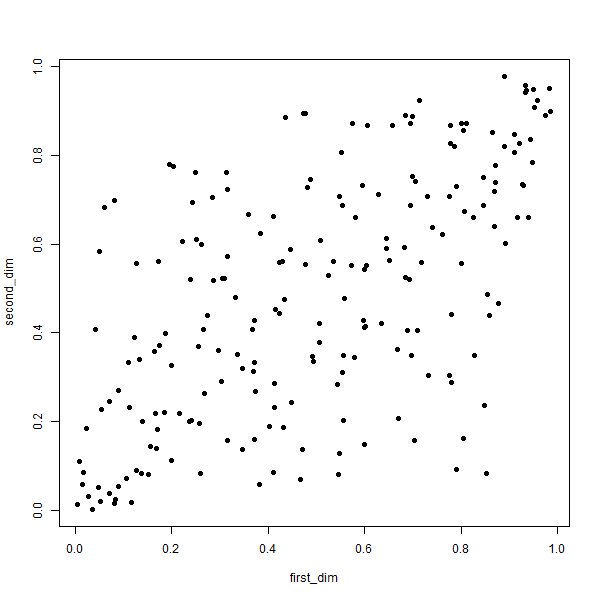} %
\includegraphics[width=1.7in]{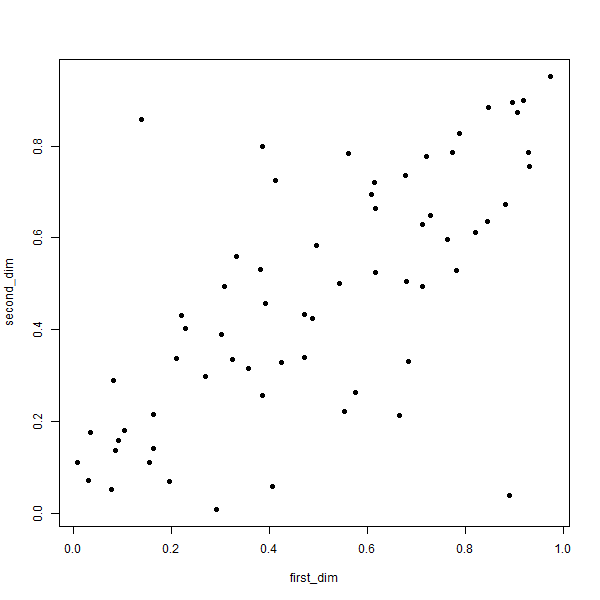} %
\includegraphics[width=1.7in]{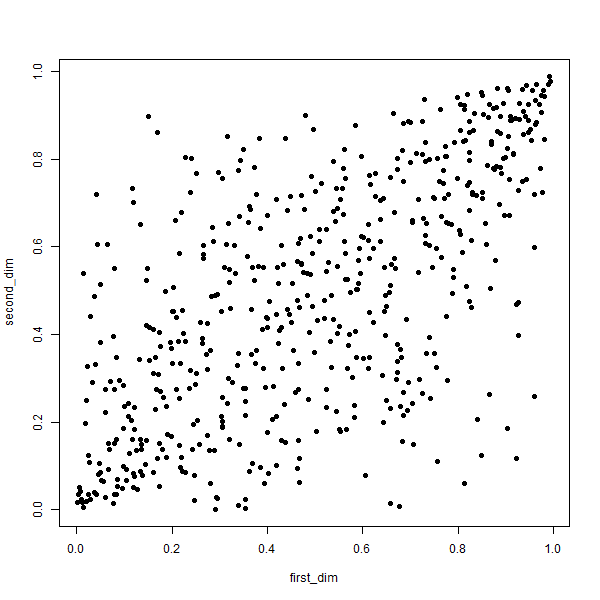}
\caption{Scatterplot of~$(\hat{V}_{1},\hat{V}_{2})$ for pre-crisis, crisis
and post-crisis index returns respectively.}
\label{fig:Kendall}
\end{figure}

Figure~\ref{fig:Kendall} shows bivariate scatterplots for the empirical
version~$(\hat{V}_{1},\hat{V}_{2})$ of~$%
(V_{1},V_{2}):=(F_{R_{1}}(R_{1}),F_{R_{2}}(R_{2}))$ for the pre-crisis,
crisis and post-crisis periods respectively. Appendix B.9 shows 
that~$V_{1}=K_{2}(U_{1}U_{2})$ and~$V_{2}=K_{3}(U_{3}U_{4}U_{5})$%
, where~$(U_{1},U_{2})$ and $(U_{3},U_{4},U_{5})$
denote the components of the ranks~$R_{1}(Y_{1})$ and~$%
R_{2}(Y_{2})$ respectively and~$K_{k}$ denotes the
Kendall distribution of the independence copula, whose expression is given
in~(\ref{eq:K-Dist}). The sample~$(\hat{V}_{1i},\hat{V}_{2i})_{i=1}^{n}$ is
computed from the sample of empirical vector ranks via~$\hat{V}_{1i}:=K_{2}(%
\hat{U}_{1i}\hat{U}_{2i})$ and~$\hat{V}_{2i}:=K_{3}(\hat{U}_{3i}\hat{U}_{4i}%
\hat{U}_{5i})$, where~$(\hat{U}_{1i},\hat{U}_{2i})$ and~$(\hat{U}_{3i},\hat{U%
}_{4i},\hat{U}_{5i})$ denote the components of the empirical ranks~$\hat{R}%
_{1}(Y_{1i})$ and~$\hat{R}_{2}(Y_{2i})$ respectively. Clayton, Gaussian,
Frank and Gumbel bivariate copulas are fitted to the sample~$(\hat{V}_{1i},%
\hat{V}_{2i})_{i=1}^{n}$ and estimated parameters are transformed to the
corresponding value of Kendall's~$\tau $ for comparison. Results are given
in Table~\ref{table:Kendall}. The vector copula analysis of these five stock
indices thereby yields an insight into the financial contagion that
accompanied the 2008 financial crisis. Kendall's~$\tau $ increased during
the crisis, then declined again, but remained weakly higher in the
post-crisis than in the pre-crisis periods.

\begin{table}[tbph]
\caption{Kendall's~$\protect\tau $ implied from the Kendall nesting copula
parameters estimated from the stock index residuals.}
\label{table:Kendall}\centering
\begin{tabular}{lccc}
\hline\hline
&  &  &  \\ 
&  &  &  \\ 
& Pre-Crisis & Crisis & Post-Crisis \\ 
&  &  &  \\ \hline
&  &  &  \\ 
&  &  &  \\ 
Clayton & 0.36 & 0.38 & 0.36 \\ 
&  &  &  \\ 
Frank & 0.45 & 0.56 & 0.49 \\ 
&  &  &  \\ 
Gaussian & 0.44 & 0.46 & 0.46 \\ 
&  &  &  \\ 
Gumbel & 0.47 & 0.57 & 0.50 \\ 
&  &  &  \\ \hline\hline
\end{tabular}%
\end{table}



\section*{Concluding remarks}

We have proposed a framework to characterize dependence between random
vectors, as distinct from within vector dependence in the same way copulas
characterize dependence between random variables as distinct from marginal
information. The basic building block is what we call a vector copula, which
is simply a multivariate distribution with uniform multivariate marginals.
Hence, the class of vector copulas is merely a subclass of the class of
copulas, with the added constraint that the multivariate marginals are
uniform. The contribution of the paper was to associate one such vector
copula with any multivariate distribution with given non overlapping
marginals and to show that this vector copula does indeed characterize
between vector dependence for such a distribution. By characterization of
between vector dependence, we mean that the original distribution can be
recovered from the vector copula and the multivariate marginals, the vector
copula is unique in case of absolutely continuous multivariate marginals,
and finally, that the vector copula is invariant to the class of
transformations that leave between vector dependence unchanged.

The main device we used to derive the vector copula associated with a
distribution is the multivariate analogue of the probability integral
transform, or vector ranks, introduced in \citet{CGHH:2017} and based on the
theory of measure transportation. However, since vector ranks are rarely
available in explicit form, parametric families of vector copulas cannot be
readily constructed from existing parametric families of multivariate
distributions. We therefore proposed a couple of strategies to construct
parametric families of vector copulas. One is based on ideas inherited from
the literature on hierarchical copula models, in which the lower level
copulas are independence copulas. Another relies on measure transport theory
to turn parametric families of multivariate distributions into vector
copulas by transforming the multivariate marginals into uniforms. To obtain
explicit forms, we rely on compositions of optimal transport maps. However,
since the composition of optimal transport maps is not, in general, an
optimal transport map itself, the parametric family of vector copulas thus
obtained is not, in general, the family of vector copulas associated with
the multivariate distribution used to derive it. In other words, it does not
characterize between vector dependence in the latter. It is nonetheless a
useful tool to construct new families of multivariate distributions in a way
that properly distinguishes between vector dependence and within vector
dependence, a goal that cannot be achieved with existing parametric families
of multivariate distributions.

Throughout the paper, we have illustrated the use of semiparametric
vector copula based models to study financial contagion through the
evolution of between-vector dependence before, during and after the 2008
financial crisis for a collection of five aggregate stock indices.
Developing formal estimation and inference methods for such models is of
utmost importance and is an on-going project of the authors. Once developed,
we anticipate many applications of this new tool, that mirror applications
of traditional copulas in quantitative finance and econometrics.



\appendix

\section{Vector ranks and multivariate monotonicity}

\label{app:OT}


\subsection{Convex analysis preliminaries}

Let~$\mathcal{U}$ and~$\mathcal{Y}$ be convex subsets of~$\mathbb{R}^{d}$
with non-empty interiors. A \emph{convex} function~$\psi $ on~$\mathcal{U}$
refers to a function~$\psi :\mathcal{U}\rightarrow \mathbb{R}\cup \{+\infty
\}$ for which~$\psi ((1-t)x+tx^{\prime })\leq (1-t)\psi (x)+t\psi (x^{\prime
})$ for any~$(x,x^{\prime })$ such that~$\psi (x)$ and~$\psi (x^{\prime })$
are finite and for any~$t\in (0,1)$. Such a function is continuous on the
interior of the convex set dom~$\psi :=\{x\in \mathcal{U}:\psi (x)<\infty \}$%
, called the domain of~$\psi $. A function~$\psi $ is called \emph{proper}
if~dom$(\psi )\neq \varnothing $.

A convex function~$\psi$ is differentiable Lebesgue-almost everywhere in dom~%
$\psi $, by the Alexandroff Theorem, see Theorem~14.25 in %
\citet{Villani:2009}. For any proper function~$\psi :\mathcal{U}\mapsto 
\mathbb{R}\cup \{+\infty \}$, the \emph{conjugate}~$\psi ^{\ast }:\mathcal{Y}%
\mapsto \mathbb{R}\cup \{+\infty \}$ of~$\psi $ is defined for each~$y\in 
\mathcal{Y}$ by 
\begin{equation*}
\psi ^{\ast }(y):=\sup_{z\in \mathcal{U}}[y^{\top }z-\psi (z)]. 
\end{equation*}
The conjugate~$\psi ^{\ast }$ of~$\psi $ is a proper convex
lower-semi-continuous function on $\mathcal{Y}$. We shall call a \emph{%
conjugate pair of potentials} over~$(\mathcal{U},\mathcal{Y})$ any pair of
lower-semi-continuous convex functions ~$(\psi ,\psi ^{\ast })$ that are
conjugates of each other. {Legendre-Fenchel duality} states that for any
proper lower semicontinuous convex function~$\psi$, $(\psi^\ast)^\ast=\psi$.

The \emph{subdifferential}~$\partial\psi(x)$ of a convex function~$\psi$ at
a point~$x\in$~dom$(\psi)$ is 
\begin{equation*}
\partial\psi(x):=\{ h\in\mathbb{R}^d\; : \; \forall y\in\mathbb{R}^d,
\psi(y) \geq \psi(x)+(y-x)^\top h\}. 
\end{equation*}
When~$\psi$ is differentiable at~$x$, then~$\partial\psi(x)=\{\nabla\psi(x)\}
$. The subdifferential is the set 
\begin{equation*}
\partial\psi:=\{(x,y)\in\mathbb{R}^d\times\mathbb{R}^d,
y\in\partial\psi(x)\}. 
\end{equation*}
For any proper lower semicontinuous convex function~$\psi :\mathcal{U}%
\rightarrow \mathbb{R}\cup \{+\infty \}$, and any pair~$x,y\in$~dom$(\psi)$,
we have, by Proposition~2.4 in \citet{Villani:2003}, 
\begin{eqnarray*}
\psi(x)+\psi^\ast(y)=x^\top y & \Leftrightarrow & y\in\partial\psi(x) \\
& \Leftrightarrow & x\in\partial\psi^\ast(y).
\end{eqnarray*}
Therefore, we have (Lemma 3.1 in \citet{GS:2019} : 
\begin{eqnarray}  \label{eq:subdiff}
\begin{array}{c}
y\in\partial\psi(\partial\psi^\ast(y))\mbox{ for all }y\in\mathbb{R}^d, \;
x\in\partial\psi^\ast(\partial\psi(x))\mbox{ for all }x\in\mathbb{R}^d, \\ 
\\ 
\mbox{and }(\partial\psi)^{-1}(B)=\partial\psi^\ast(B), \mbox{ for any Borel
set }B.%
\end{array}%
\end{eqnarray}

A subset~$\Gamma\in\mathbb{R}^d\times\mathbb{R}^d$ is called \emph{%
cyclically monotone} if for all~$m\geq2$ and for all sequences of pairs~$%
(x_1,y_1),\ldots,(x_m,y_m)$ in~$\Gamma$, 
\begin{eqnarray*}
\sum_{i=1}^m\|y_i-x_i\|^2\geq\sum_{i=1}^m\|y_{i-1}-x_i\|^2,
\end{eqnarray*}
with the convention~$y_0=y_m$, or equivalently, 
\begin{eqnarray*}
\sum_{i=1}^my_i^\top(x_{i+1}-x_i)\leq0,
\end{eqnarray*}
with the convention~$x_{m+1}=x_1$. A function~$\psi$ is called cyclically
monotone if its graph~$\{(x,y)\in\mathbb{R}^d\times\mathbb{R}^d, y=\psi(x)\}$
is a cyclically monotone set, or equivalently if for all finite collection~$%
x_1,\ldots,x_m$ of points in~dom$(\psi)$, 
\begin{eqnarray*}
\sum_{i=1}^m\psi(x_i)^\top(x_{i+1}-x_i)\leq0,
\end{eqnarray*}
with the convention~$x_{m+1}=x_1$. By Theorem~2.27 of \citet{Villani:2003},~$%
\Gamma$ is cyclically monotone if and only if it is included in the
subdifferential of a proper lower semicontinuous convex function, and the
maximal cyclically monotone subsets of~$\mathbb{R}^d\times\mathbb{R}^d$ are
exactly the subdifferentials of proper lower semicontinuous convex functions
on~$\mathbb{R}^d$. Hence, the cyclically monotone maps on~$\mathbb{R}^d$ are
the gradients of convex functions.


\subsection{Vector ranks}

Let~$X$ be a random variable with absolutely continuous distribution~$P_{X}$%
. The cumulative distribution~$F_{X}$ of~$X$ is a rank function that
transports the distribution~$P_{X}$ to the uniform distribution~$\mu $ on~$%
[0,1]$, in the sense that~$F_{X}(X)\sim \mu $. Conversely, the quantile
function~$F_{X}^{-1}$ transports the distribution~$\mu $ to~$P_{X}$, in the
sense that for any~$U\sim \mu $, $F_{X}^{-1}(U)\sim P_{X}$. Among all maps
that transport~$\mu $ to~$P_{X}$ (resp.~$P_{X}$ to~$\mu $), the map~$%
F_{X}^{-1}$ (resp.~$F_{X}$) is very special: it is the only monotone
nondecreasing one. The map~$F_{X}^{-1}$ is also the almost everywhere only
map that minimizes expected squared error (or, equivalently, maximizes
correlation) 
\begin{equation}
\label{eq:OT}
\min_{T}\mathbb{E}[\Vert U-T(U)\Vert ^{2}],\mbox{ subject to }U\sim \mu %
\mbox{ and }T(U)\sim P_{X}.
\end{equation}%
Problem~(\ref{eq:OT}) can still be used to define quantiles in~$\mathbb{R}%
^{d}$ in spite of the absence of a natural ordering of outcomes. Let~$\mu $
now be supported on a compact subset~$\mathcal{K}$ of~$\mathbb{R}^{d}$. The
map~$Q_{X}:\mathcal{K}\rightarrow \mathbb{R}^{d}$ that minimizes~(\ref{eq:OT}%
) exists and is almost everywhere unique and is cyclically monotone (almost
everywhere equals the gradient of a convex function) by the \emph{Polar
Factorization Theorem}, see \citet{RR:90} and \citet{Brenier:91}.

Suppose the hypotheses of Proposition~\ref{prop:polar}(1) hold and~$P_X$ and~%
$\mu$ have finite second moments, then the function $\psi$, or optimal
potential, solves the optimization problem 
\begin{eqnarray}  \label{eq:Kdual}
\int\psi d\mu+\int\psi^\ast dP_X=\inf_{(\tilde\psi, \tilde\psi^*)} \left(
\int\tilde\psi d\mu+\int\tilde\psi^\ast dP_X \right),
\end{eqnarray}
where the infimum is taken over the class of conjugate pairs of potentials.
This problem is dual to the optimal transport problem (\ref{eq:OT}).
Moreover, under the hypotheses of Proposition~\ref{prop:polar}(1), $%
\nabla\psi$ is the unique optimal transport map from~$\mu$ to~$P$ for
quadratic cost, in the sense that any other optimal transport coincides with 
$\nabla \psi$ on a set of $\mu$-measure~$1$ (see \citet{Villani:2003}).
Under the hypotheses of Proposition~\ref{prop:polar}(2), $\nabla\psi^*$ is
the unique optimal (reverse) transport map from $P_X$ to~$\mu$ for quadratic
cost, in the sense that any other optimal transport coincides with $\nabla
\psi^*$ on a set of $P_X$-measure~$1$.

Formulation~(\ref{eq:Kdual}) shows that transport potentials~$%
(\psi,\psi^\ast)$ can be computed as the solution of a convex optimization
problem. The discrete version of~(\ref{eq:Kdual}), when both~$\mu$ and~$P_X$
are discretized, is a linear programming problem. Hence, empirical versions
of the transport potentials can be computed efficiently. See \citet{PC:2019}
for a recent account of computational optimal transport. A recent algorithm
not included in the latter is \citet{JL:2020}. Convergence of empirical
versions to population counterparts is shown in~\citet{CGHH:2017}, %
\citet{ZP:2019}, \citet{BCHM:2020}, \citet{GS:2019}, \citet{BSS:2018}, and %
\citet{DS:2019}.

There are very few known explicit formulas for optimal transport maps. Some
notable exceptions we use in this paper are given in the following lemma. If~%
$X$ is a random vector with distribution~$P_X$ and~$T$ solves~(\ref{eq:OT}),
then the pair~$(X,T(X))$ is called an \emph{optimal coupling with respect to
quadratic transportation cost}.

\begin{lemma}[\citet{CRT:93}]
\label{lemma:radial-W2} Let~$A$ be a positive definite matrix on~$\mathbb{R}%
^d$. Let~$\alpha $ be a non-decreasing function and define $T(x)=\alpha
(\lVert x\rVert)x/\lVert x\rVert$. If~$X$ has absolutely continuous
distribution on~$\mathbb{R}^d$, then $(X,AX)$ and~$(X,T(X))$ are both
optimal couplings with respect to quadratic transportation cost.
\end{lemma}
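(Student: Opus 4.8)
The plan is to reduce optimality to a potential-function construction, using the characterization recalled in the convex-analysis preliminaries above: the cyclically monotone maps on $\mathbb{R}^d$ are exactly the gradients of lower semicontinuous convex functions, and a coupling whose support is cyclically monotone is optimal for quadratic transportation cost. Since $X$ is absolutely continuous and convex functions are differentiable Lebesgue-almost everywhere, any convex potential I produce is differentiable $P_X$-almost everywhere, so each map is well defined $P_X$-a.e. Thus it suffices to exhibit, for each of $x\mapsto Ax$ and $x\mapsto T(x)$, a convex function whose gradient equals the map $P_X$-a.e.; the support of the coupling then lies in the subdifferential of that convex function and the coupling is optimal.

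For $(X,AX)$ I would take $\psi(x):=\tfrac12\,x^{\top}Ax$. Because $A$ is (symmetric) positive definite, $\psi$ is convex with Hessian $A\succ0$, and $\nabla\psi(x)=Ax$. Hence the support of $(X,AX)=(\mbox{Id},A)\#P_X$ lies in the graph of $\nabla\psi\subset\partial\psi$, which is cyclically monotone, so this coupling is optimal. (Symmetry of $A$ is genuinely needed here, since $x\mapsto Ax$ is a gradient only when $A=A^{\top}$; in the intended application $A=\Sigma^{1/2}$ is the symmetric square root, so this is not restrictive.)

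For $(X,T(X))$ I would introduce the radial potential $\phi(x):=\int_{0}^{\Vert x\Vert}\alpha(s)\,ds=:\beta(\Vert x\Vert)$. For $x\neq0$ a direct computation gives $\nabla\phi(x)=\beta'(\Vert x\Vert)\,x/\Vert x\Vert=\alpha(\Vert x\Vert)\,x/\Vert x\Vert=T(x)$, so $\phi$ is the required potential as soon as it is convex. This convexity check is where the real content sits. I would invoke the criterion that a radial function $x\mapsto\beta(\Vert x\Vert)$ is convex on $\mathbb{R}^d$ if and only if $\beta$ is convex and non-decreasing on $[0,\infty)$: convexity of $\beta$ is immediate from $\beta'=\alpha$ being non-decreasing, while monotonicity of $\beta$ follows from $\alpha\geq0$, which holds because $T$ is a radial rescaling onto a distribution with non-negative radius (in the application $\alpha=F_{\tilde R}^{-1}\circ F_R\geq0$). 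Granting $\phi$ convex, $T=\nabla\phi$ holds $P_X$-a.e. and $(X,T(X))$ is optimal by the same cyclical-monotonicity argument.

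The main obstacle is exactly the convexity of the radial potential at the origin, where $\beta(\Vert\cdot\Vert)$ can develop a kink: the composition rule for a non-decreasing convex outer function with the convex inner function $\Vert\cdot\Vert$ is what excludes a non-convex cusp, and this is precisely the step that consumes both the monotonicity and the non-negativity of $\alpha$. Absolute continuity of $P_X$ then discards the Lebesgue-null singular set (the origin, and any other point of non-differentiability), so that the identifications $AX=\nabla\psi(X)$ and $T(X)=\nabla\phi(X)$ hold $P_X$-almost surely, which is all the optimality characterization requires.
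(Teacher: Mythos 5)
Your argument is correct, but it is worth noting that the paper does not actually prove this lemma at all: it is stated as a citation to \citet{CRT:93}, and no proof appears in the appendix. What you have supplied is therefore a self-contained derivation via the standard route (the Knott--Smith/R\"uschendorf optimality criterion: a coupling supported on the subdifferential of a lower semicontinuous convex function is optimal for quadratic cost), with the two explicit potentials $\psi(x)=\tfrac12 x^{\top}Ax$ and $\phi(x)=\int_0^{\|x\|}\alpha(s)\,ds$. This buys transparency about exactly which hypotheses are consumed where, and indeed you correctly flag the two points the paper's statement glosses over: $A$ must be symmetric for $x\mapsto Ax$ to be a gradient (harmless here, since the application uses the symmetric square root $\Sigma^{1/2}$), and $\alpha$ must be non-negative, not merely non-decreasing, for the radial potential to be convex. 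The second point is not a quirk of your method --- the statement as literally written is false without it. In $d=1$ with $\alpha(s)=s-1$ and $X\sim U[-2,2]$, one has $T(x)=(|x|-1)\operatorname{sign}(x)$, so $T(X)\sim U[-1,1]$ and $\mathbb{E}[(X-T(X))^2]=1$, whereas the monotone map $x\mapsto x/2$ achieves cost $1/3$; the downward jump of $T$ at the origin destroys (cyclical) monotonicity. Since the paper's application always has $\alpha=F_{\tilde R}^{-1}\circ F_R\geq 0$, nothing downstream is affected, but your proof makes the needed hypothesis visible. The only point you leave implicit is that ``optimal coupling'' presupposes finite quadratic transport cost (or the generalized, McCann-type notion of optimality via cyclical monotonicity of the support); this matches the level of rigor of the surrounding text, which also invokes second moments only informally.
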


Problem~(\ref{eq:OT}) requires existence of second moments, a restriction
that is incompatible with a definition of vector quantiles for arbitrary
distributions. However, in a seminal extension of the Polar Factorization
Theorem based on cyclical monotonicity of the solution, \citet{McCann:95}
shows existence and uniqueness of the cyclically monotone map transporting
an absolutely continuous distribution~$\mu$ to an arbitrary distribution~$P_X
$, as stated in Proposition~\ref{prop:polar}. This motivates the definition
of vector quantiles and ranks in Definition~\ref{def:MK}.

When both~$\mu$ and~$P_X$ admit densities~$m$ and~$p$ respectively, the
change of variables formula is a special case of the celebrated \emph{%
Monge-Amp\`ere equation}, see Chapter~4 of \citet{Villani:2003}. From
Proposition~\ref{prop:polar}, we know that there exists an almost everywhere
unique gradient~$\nabla\psi$ (the vector quantile of Definition~\ref{def:MK}%
) of a convex function~$\psi$ such that~$\nabla\psi\#\mu=P_X$. Hence, for
any continuous bounded function~$f$, we have 
\begin{eqnarray}  \label{eq:MA1}
\int f(x)p(x)dx & = & \int f(\nabla\psi(u))m(u)du.
\end{eqnarray}
Assuming that~$p$ is positive and that~$\psi$ is twice continuously
differentiable and strictly convex, then the change of variables~$%
x=\nabla\psi(u)$ in the left-hand side of~(\ref{eq:MA1}) yields 
\begin{eqnarray}  \label{eq:MA2}
\int f(x)p(x)dx & = & \int
f(\nabla\psi(u))p(\nabla\psi(u))\det(D^2\psi(u))du.
\end{eqnarray}
From equality of the right-hand sides of~(\ref{eq:MA1}) and~(\ref{eq:MA2})
and the fact that~$f$ can be chosen arbitrarily, we get the Monge-Amp\`ere
equation 
\begin{eqnarray}  \label{eq:MA3}
\det\left( D^2\psi(u) \right) & = & \frac{m(u)}{p\left( \nabla\psi(u)\right)}%
, \;\;\mu\mbox{-a.e.},
\end{eqnarray}
which we use to compute vector copula densities in Section~\ref{sec:Sklar}.

Finally, the following lemma shows how vector quantiles and ranks relate to
the quantiles and ranks of mutually independent subvectors.

\begin{lemma}[\citet{GS:2019}]
\label{lemma:GS} Let~$X_k$ be a random vector on~$\mathbb{R}^{d_k}$ with
vector quantile~$Q_k$ and vector rank~$R_k$,~$k\leq K$, and let~$%
X=(X_1,\ldots,X_K)$ have vector quantile~$Q$ and vector rank~$R$. Then the
following statements are equivalent:

\begin{enumerate}
\item The random vectors~$X_1,\ldots,X_K$ are mutually independent.

\item We have~$Q=(Q_1,\ldots,Q_K)$ and~$R=(R_1,\ldots,R_K)$ almost
everywhere.
\end{enumerate}
\end{lemma}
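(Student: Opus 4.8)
The plan is to prove the two implications separately, handling the quantile and rank statements in parallel since they are conjugate to one another; the entire substance will rest on the essential uniqueness already supplied by Proposition~\ref{prop:polar}.

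For the easy direction $(2)\Rightarrow(1)$, I would take a uniform $U=(U_1,\ldots,U_K)\sim\mu=\mu_1\otimes\cdots\otimes\mu_K$. Because $\mu$ is a product measure, the blocks $U_1,\ldots,U_K$ are mutually independent. If $Q=(Q_1,\ldots,Q_K)$ holds $\mu$-almost everywhere, then $X\overset{d}{=}Q(U)=(Q_1(U_1),\ldots,Q_K(U_K))$, and each $Q_k(U_k)$ is a measurable function of $U_k$ alone; mutual independence of the $U_k$ is preserved under these separate maps, so $X_1,\ldots,X_K$ are mutually independent.

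For the harder direction $(1)\Rightarrow(2)$, I would exhibit the candidate map $\tilde Q:=(Q_1,\ldots,Q_K)$ and verify that it satisfies both properties characterising the vector quantile of $X$ in Proposition~\ref{prop:polar}(1), then conclude by uniqueness. First, $\tilde Q$ pushes $\mu$ forward to $P$: since $Q_k\#\mu_k=P_k$ and $\mu=\mu_1\otimes\cdots\otimes\mu_K$, the pushforward of the product measure under the product map gives $\tilde Q\#\mu=(Q_1\#\mu_1)\otimes\cdots\otimes(Q_K\#\mu_K)=P_1\otimes\cdots\otimes P_K$, which equals $P$ precisely because mutual independence of $X_1,\ldots,X_K$ means $P=P_1\otimes\cdots\otimes P_K$. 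Second, $\tilde Q$ is a gradient of a convex function: writing $Q_k=\nabla\psi_k$ with $\psi_k$ convex on $[0,1]^{d_k}$, the separable sum $\psi(u_1,\ldots,u_K):=\sum_{k=1}^K\psi_k(u_k)$ is convex on $[0,1]^d$ and satisfies $\nabla\psi=(\nabla\psi_1,\ldots,\nabla\psi_K)=\tilde Q$. By the uniqueness clause of Proposition~\ref{prop:polar}(1), the gradient-of-convex-function map pushing $\mu$ to $P$ is unique $\mu$-almost everywhere, so $Q=\tilde Q=(Q_1,\ldots,Q_K)$ almost everywhere.

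The rank statement then follows by conjugate duality. The convex conjugate of the separable potential $\psi$ is itself separable, $\psi^\ast(y_1,\ldots,y_K)=\sum_{k=1}^K\psi_k^\ast(y_k)$, whence $\nabla\psi^\ast=(\nabla\psi_1^\ast,\ldots,\nabla\psi_K^\ast)=(R_1,\ldots,R_K)$; uniqueness in Proposition~\ref{prop:polar}(2), which applies because each $P_k$ and hence $P$ is absolutely continuous, then yields $R=(R_1,\ldots,R_K)$ almost everywhere. The only steps requiring explicit care are the claim that the gradient of a separable sum is the block-concatenation of the individual gradients and the separable-conjugate identity; both are elementary, and this bookkeeping---rather than any deep transport argument---is the main obstacle, since the entire force of the result is carried by the uniqueness in Proposition~\ref{prop:polar}.
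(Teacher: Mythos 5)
Your argument is correct. Note, however, that the paper itself gives no proof of this lemma: it is stated as a cited result from \citet{GS:2019} (their Lemma~3.1 / accompanying discussion), so there is no in-paper proof to compare against. Your self-contained argument is the standard one and matches what is in that reference: the whole content is that the block map $(Q_1,\ldots,Q_K)$ is the gradient of the separable convex potential $\sum_k\psi_k(u_k)$ and pushes $\mu=\mu_1\otimes\cdots\otimes\mu_K$ to $P_1\otimes\cdots\otimes P_K$, so under independence McCann's uniqueness identifies it with $Q$; the rank statement follows either from separability of the conjugate or, as you do, by applying the uniqueness clause of Proposition~\ref{prop:polar}(2) directly to the block map $(R_1,\ldots,R_K)$, which is cleaner since it avoids worrying about the potential being determined only up to an additive constant. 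The only points worth making explicit are the standing hypotheses needed for the rank half: each $P_k$ must be absolutely continuous with support in a convex set (implicit in the lemma, since ranks are only defined in that case), and then $P=\bigotimes_k P_k$ is absolutely continuous with support in the convex product set, so Proposition~\ref{prop:polar}(2) indeed applies.
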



\section{Proofs of results in the main text}

\label{app:proofs}


\subsection{A Proof of the Vector Sklar Theorem}

For the proof of Theorem~\ref{thm:Sklar}, we need the following definition
and result, due to \citet{Vorobev:62} and \citet{Kellerer:64}.

\begin{definition}[Decomposability]
A finite collection~$\{S_{1},\ldots,S_{N}\}$ of subsets of a finite set~$%
\mathcal{S}$ is called \emph{decomposable} if there exists a permutation~$%
\sigma$ of~$\{1,\ldots,N\}$ such that 
\begin{eqnarray}
\left(\,\bigcup_{l<m}S_{\sigma(l)}\right)\cap
S_{\sigma(m)}\in\bigcup_{l<m}2^{S_{\sigma(l)}}\;\mbox{ for all }1\leq m\leq
N.  \label{eq:dec}
\end{eqnarray}
\end{definition}

For instance, the collection of subsets~$\{\{1,2\},\{2,3\}\}$ of set~$%
\{1,2,3\}$ is decomposable, but~$\{\{1,2\},\{2,3\},\{1,3\}\}$ is not.%
\footnote{%
We are grateful to an anonymous referee for suggesting adding this
counterexample.}

Proposition~\ref{prop:Kel} below is proved in \citet{Kellerer:64}. It is
also referenced in \citet{Rusch:2010} as Theorem~1.27 page~29 and in
Section~3.7 of \citet{Joe:97}.

\begin{proposition}[Existence of probability measures with overlapping
marginals]
\label{prop:Kel}

Let~$X_{k}:=\left(\mathbb{R},\mathcal{B}\left(\mathbb{R}\right)\right)$, for~%
$k=1,\ldots,K$. Let~$\mathcal{S}:=\{S_{1},\ldots,S_{N}\} $ be an arbitrary
collection of subsets of~$\{1,\ldots,K\}$. For each~$j\leq N$, let~$P_{j}$
be a probability measure on the product space~$\times_{k\in S_{j}}X_{i}$.
Then there exists a probability measure on~$\times_{k}X_{k}$ with marginal~$%
P_{j}$ on~$\times_{k\in S_{j}}X_{k}$, all~$j=1,\ldots,N$, if the following
two conditions hold.

\begin{enumerate}
\item The marginals~$P_{j_{1}}$ and $P_{j_{2}}$ coincide on~$\times_{k\in
S_{j_{1}}\cap S_{j_{2}}}X_{k}$, all~$j_{1}<j_{2}\leq N$.

\item The collection~$\mathcal{S}$ is decomposable.
\end{enumerate}
\end{proposition}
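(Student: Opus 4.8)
The plan is to prove the proposition by induction on $N$, first relabeling the collection so that the decomposability permutation in~(\ref{eq:dec}) is the identity, and then gluing the prescribed measures $P_1,\ldots,P_N$ one at a time by means of regular conditional probabilities. The essential structural fact to exploit is that each $X_k=(\mathbb{R},\mathcal{B}(\mathbb{R}))$ is a standard Borel space, which guarantees that every disintegration used below exists. The coordinates lying outside $T_N:=\bigcup_{j\le N}S_j$ are unconstrained by the marginal requirements, so it suffices to build a probability measure on $\times_{k\in T_N}X_k$ with marginal $P_j$ on $\times_{k\in S_j}X_k$ for every $j$, and then extend it to $\times_{k=1}^{K}X_k$ by taking a product with any fixed probability measure on the remaining coordinates. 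With $T_m:=S_1\cup\cdots\cup S_m$, I would prove by induction on $m\le N$ the existence of a probability measure $\pi_m$ on $\times_{k\in T_m}X_k$ whose marginal on $\times_{k\in S_j}X_k$ equals $P_j$ for every $j\le m$. The base case $m=1$ takes $\pi_1=P_1$.

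For the inductive step I would set $I:=T_{m-1}\cap S_m$ and $J:=S_m\setminus T_{m-1}$. The crux is that the decomposability condition~(\ref{eq:dec}) says exactly that $I\in\bigcup_{l<m}2^{S_l}$, i.e. $I$ is contained in a single earlier index set $S_l$ with $l<m$ (when $I=\varnothing$ this holds trivially and the gluing below reduces to an independent product). Since then $I\subseteq S_l\cap S_m$, the consistency hypothesis~(1) forces $P_l$ and $P_m$ to agree on $\times_{k\in I}X_k$. On the other hand $\pi_{m-1}$ carries marginal $P_l$ on $S_l\supseteq I$ by the induction hypothesis, so the $I$-marginal of $\pi_{m-1}$ equals the $I$-marginal of $P_l$, which equals the $I$-marginal of $P_m$; call this common measure $\rho$.

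I would then disintegrate $P_m$ over its $I$-marginal $\rho$, obtaining a measurable family of kernels $P_m(\,\cdot\mid x_I)$ on $\times_{k\in J}X_k$, and define the glued measure on $\times_{k\in T_m}X_k$, viewed as $\big(\times_{k\in T_{m-1}}X_k\big)\times\big(\times_{k\in J}X_k\big)$, by
\[
\pi_m(dx_{T_{m-1}},dx_J):=\pi_{m-1}(dx_{T_{m-1}})\,P_m(dx_J\mid x_I),
\]
where $x_I$ denotes the $I$-block of $x_{T_{m-1}}$. Integrating out $x_J$ recovers $\pi_{m-1}$, so every earlier marginal $P_1,\ldots,P_{m-1}$ (each indexed by a subset of $T_{m-1}$) is preserved; integrating out $x_{T_{m-1}\setminus I}$ recovers $\rho(dx_I)\,P_m(dx_J\mid x_I)=P_m$ by the disintegration identity, supplying the new marginal. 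This closes the induction, and $\pi_N$, extended as above, is the desired measure.

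The main obstacle is the measure-theoretic bookkeeping in the gluing step rather than any combinatorics. One must verify that the kernel $x_I\mapsto P_m(\,\cdot\mid x_I)$ is jointly measurable and that the product formula genuinely defines a probability measure with the two claimed marginals; both rest on the existence and essential uniqueness of regular conditional probabilities on standard Borel spaces. The other delicate point is checking that the two independent computations of the $I$-marginal—one routed through $\pi_{m-1}$ via $P_l$, the other through $P_m$ via consistency—really coincide, and this is precisely where hypotheses~(1) and~(2) are jointly used. Decomposability enters only through the existence of the single covering set $S_l$; without it the overlap $I$ could straddle several earlier sets on which $\pi_{m-1}$ need not induce a coherent $I$-marginal, and the sequential gluing would break down.
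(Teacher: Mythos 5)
Your proof is correct, but there is no internal proof in the paper to compare it against: the paper invokes Proposition~\ref{prop:Kel} as an external result, attributed to \citet{Vorobev:62} and \citet{Kellerer:64}, with pointers to Theorem~1.27 of \citet{Rusch:2010} and Section~3.7 of \citet{Joe:97}. Your sequential-gluing argument is essentially the standard sufficiency proof found in those references: order the sets by the decomposability permutation; use the running-intersection property to produce a \emph{single} earlier set $S_l$ containing the overlap $I=T_{m-1}\cap S_m$; use pairwise consistency, routed through $P_l$, to identify the $I$-marginal of $\pi_{m-1}$ with that of $P_m$; and glue via a regular conditional probability of $P_m$ given the $I$-coordinates, which exists because the coordinate spaces are standard Borel. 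Your two marginal checks (integrating out $x_J$ preserves $P_1,\ldots,P_{m-1}$; integrating out $x_{T_{m-1}\setminus I}$ recovers $\rho(dx_I)\,P_m(dx_J\mid x_I)=P_m$) are exactly what is needed, and your handling of the coordinates outside $\bigcup_j S_j$ and of the case $I=\varnothing$ is right. Two small additions would make it airtight: treat the degenerate case $S_m\subseteq T_{m-1}$ (i.e. $J=\varnothing$) explicitly --- there your formula yields $\pi_m=\pi_{m-1}$, and consistency with $P_l$ shows its $S_m$-marginal is already $P_m$ --- and note that your closing remark on where decomposability is indispensable is precisely illustrated by the paper's own counterexample $\{\{1,2\},\{2,3\},\{1,3\}\}$, for which pairwise-consistent marginals need not extend. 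What your argument buys, relative to the paper, is a self-contained and elementary derivation of exactly the sufficiency direction the paper uses, whereas Kellerer's original treatment is substantially more general.
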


\begin{proof}[Proof of Theorem~\protect\ref{thm:Sklar}]
\begin{enumerate}
\item For~$i=1,\ldots ,2\times\sum_{k=1}^{K}d_{k},$ define~$X_{i}:=(\mathbb{R%
},\mathcal{B}(\mathbb{R}))$. Let~$S_{1}:=\left\{ 1,2,\ldots
,\sum_{k=1}^{K}d_{k}\right\} $ and for each~$k\leq K$, set~$d_{0}:=0$, and
define 
\begin{equation*}
S_{1+k}:=\left\{ \sum_{l=\min \{1,k-1\}}^{k-1}d_{l}+1,\ldots
,\sum_{l=1}^{k}d_{l},\sum_{l=1}^{K}d_{l}+\sum_{l=1}^{k-1}d_{l}+1\ldots
,\sum_{l=1}^{K}d_{l}+\sum_{l=1}^{k}d_{l}\right\} .
\end{equation*}%
We first show existence of a joint probability distribution~$\pi $ on~$%
\times _{i}X_{i}$ with marginals~$P$ on~$\times _{i\in S_{1}}X_{i}$ and~$(%
\mbox{Id},T_{k})\#\mu _{k}$ on~$\times _{i\in S_{1+k}}X_{i}$, each~$k\leq K$%
. For this, we only need to verify conditions~(1) and~(2) of Proposition~\ref%
{prop:Kel} applied to the family~$\mathcal{S}:=\{S_{1},\ldots ,S_{N}\}$,
where~$N:=1+K$. Condition~(1) is satisfied, since the marginal of~$(\mbox{Id}%
,T_{k})\#\mu _{k}$ is~$T_{k}\#\mu _{k}$, which is equal to~$P_{k}$ by
definition of the generalized vector quantile (Definition~\ref{def:MK}).
There remains to show that the collection~$\mathcal{S}$ is decomposable.
Take any integer~$m\in \{2,\ldots ,N\}$. We have 
\begin{equation*}
\left( \,\bigcup_{l<m}S_{l}\right) \cap S_{m}=\left\{ \sum_{l=\min
\{1,k-1\}}^{k-1}d_{l}+1,\ldots ,\sum_{l=1}^{k}d_{l}\right\} ,
\end{equation*}%
which belongs to the set of subsets of~$S_{1}$, so that Equation~(\ref%
{eq:dec}) holds. The theorem follows since the projection of~$\pi $ on~$%
\mathcal{U}_{1}\times \ldots \times \mathcal{U}_{K}$ is a vector copula as
desired.

\item Let $\left( Y_{1},\ldots ,Y_{K},U_{1},\ldots ,U_{K}\right) $ follow a
joint distribution on~$(\mathbb{R}^{d_{1}}\times \ldots \times \mathbb{R}%
^{d_{K}})\times (\mathcal{U}_{1}\times \ldots \times \mathcal{U}_{K})$ as
in~(1). Then for each~$k\leq K$, $\left( U_{k},Y_{k}\right) \sim (\mbox{Id}%
,\nabla\psi_{k})\#\mu _{k}$ and~$Y_{k}=\nabla\psi_{k}\left( U_{k}\right) $.
Hence for all Borel sets~$A_{1},\ldots ,A_{K}$, in~$\mathbb{R}%
^{d_{1}},\ldots ,\mathbb{R}^{d_{K}}$, 
\begin{eqnarray*}
P\left( A_{1}\times \ldots \times A_{K}\right) &=&\Pr \left(
\nabla\psi_{1}\left( U_{1}\right) \in A_{1},\ldots ,\nabla\psi_{K}\left(
U_{K}\right) \in A_{K}\right) \\
&=&\Pr \left( U_{1}\in \partial \psi _{1}^{\ast } ,\ldots , U_{K}\in
\partial \psi _{K}^{\ast }\left( A_{K}\right) \right) \\
&=&P_{C}\left( \partial\psi_{1}^{\ast }\left( A_{1}\right) \times \ldots
\times \partial\psi_{K}^{\ast }\left( A_{K}\right) \right) ,
\end{eqnarray*}%
where the penultimate equality follows from Equation~(\ref{eq:subdiff}).

\item When $P_{k}$ is absolutely continuous with support in a convex set, $%
U_{k}=\nabla\psi_{k}^{\ast}\left( Y_{k}\right) $, $P_{k}$-almost everywhere,
by Definition~\ref{def:MK}. So for all Borel sets $B_{1},\ldots ,B_{K}$, in $%
\mathcal{U}_{1},\ldots ,\mathcal{U}_{K}$, 
\begin{eqnarray*}
P_{C}\left( B_{1}\times \ldots \times B_{K}\right) &=&\Pr \left(
\nabla\psi_{1}^{\ast}\left( Y_{1}\right) \in B_{1},\ldots
,\nabla\psi_{K}^{\ast}\left( Y_{K}\right) \in B_{K}\right) \\
&=&\Pr \left( Y_{1}\in \nabla\psi_{1}\left( B_{1}\right) ,\ldots ,Y_{K}\in
\nabla\psi_{K}\left( B_{K}\right) \right) \\
&=&P\left( \nabla\psi_{1}\left( B_{1}\right) \times \ldots \times
\nabla\psi_{K}\left( B_{K}\right) \right) ,
\end{eqnarray*}%
implying that the vector copula~$C$ is uniquely determined.
\end{enumerate}
\end{proof}


\subsection{Proof of Theorem~\protect\ref{thm:CI}}

Let $\left( Y_{1},\ldots ,Y_{K},U_{1},\ldots ,U_{K}\right) $ follow a joint
distribution on $(\mathbb{R}^{d_{1}}\times \ldots \times \mathbb{R}%
^{d_{K}})\times (\mathcal{U}_{1}\times \ldots \times \mathcal{U}_{K})$ as in
Theorem~\ref{thm:Sklar}(1). For each~$k\leq K$, denote by~$Q_{k} $ and~$%
\tilde{Q}_{k}$ vector quantiles associated with the distributions of~$Y_{k}$
and~$\tilde{Y}_{k}$, respectively. Then, for each~$k\leq K$, it holds that $%
Y_{k}=Q_{k}(U_{k})$, $\mu _{k}$-almost surely. Since~$Y_{k}$ and~$\tilde{Y}%
_{k}$ are comonotonic, we also have $\tilde{Y}_{k}=\tilde{Q}_{k}(U_{k})$, $%
\mu _{k}$-almost surely. Hence the joint distribution of $(\tilde{Y}%
_{1},\ldots ,\tilde{Y}_{K},U_{1},\ldots ,U_{K})$ satisfies the conditions
that characterize a vector copula associated with the distribution~$\tilde{P}
$ of $(\tilde{Y}_{1},\ldots ,\tilde{Y}_{K})$.


\subsection{Proof of Lemma~\protect\ref{lemma:antitone}}

Let the vector $\left( Y_{1},\ldots ,Y_{K},U_{1},\ldots ,U_{K}\right) $
follow a joint distribution on $(\mathbb{R}^{d_{1}}\times \ldots \times 
\mathbb{R}^{d_{K}})\times (\mathcal{U}_{1}\times \ldots \times \mathcal{U}%
_{K})$ as in Theorem~\ref{thm:Sklar}(1). For each~$k\leq K$, denote by~$%
Q_{k} $ and~$\tilde{Q}_{k}$ vector quantiles associated with the
distributions of~$Y_{k}$ and~$\tilde{Y}_{k}$, respectively. Then $%
Y_{k}=Q_{k}(U_{k})$, $\mu _{k}$-almost surely. Since~$Y_{k}$ and~$\tilde{Y}%
_{k}$ are comonotonic (resp. countermonotonic) for each~$k\leq K_{1}$ (resp. 
$K_1<k\leq K$), we have $\tilde{Y}_{k}=\tilde{Q}_{k}(U_{k})$ (resp. $\tilde{Y%
}_{k}=\tilde{Q}_{k}(1_{d_{k}}-U_{k})$) for each~$k\leq K_{1}$ (resp. $%
K_1<k\leq K$), $\mu _{k}$-almost surely. Hence the joint distribution of $%
(U_{1},\ldots ,U_{K_{1}},1_{d_{K_{1}+1}}-U_{K_{1}+1},\ldots ,1_{d_{K}}-U_{K})
$ satisfies the conditions that characterize a vector copula associated with
the distribution~$\tilde{P}$ of $(\tilde{Y}_{1},\ldots ,\tilde{Y}_{K})$.


\subsection{Proof of Lemma~\protect\ref{lemma:CC}}

Let~$U_{1},\ldots ,U_{K}$ be comonotonic vectors. Then, by definition of
comonotonicity, in view of the fact that~$U_{1}$ has distribution~$\mu $, $%
U_{k}=Q_{k}(U_{1})$, where~$Q_{k}$ is the vector quantile of~$U_{k}$ for
each~$2\leq k\leq K$. Since~$U_{k}$ also has distribution~$\mu $, it follows
that~$Q_{k}=\mbox{Id}$, for each~$2\leq k\leq K$. Hence,~$U_{1}=\cdots =U_{K}
$. The conclusion on comonotonic copulas follows. Let~$U_{1}$ and~$U_{2}$ be
distributed according to~$\mu $ and be countermonotonic. Then~$%
U_{2}=Q_{2}(1_{d}-U_{1})$, where~$Q_{2}$ is the vector quantile of~$U_{2}$.
Since~$U_{2}$ also has distribution~$\mu $, it follows that~$T_{2}=\mbox{Id}$%
. Hence,~$U_{2}=1_{d}-U_{1}$. The characterization of the countermonotonic
copulas follows.


\subsection{Proof of Lemma~\protect\ref{lemma:MK-ell}}

First,~$T=\nabla \psi _{3}\circ \nabla \psi _{2}\circ \nabla \psi _{1}$,
where%
\begin{eqnarray*}
\nabla \psi _{1}\left( u\right) = \Phi ^{-1}\left( u\right), \; \nabla\psi
_{2}\left( u\right) = \frac{F_{\tilde{R}}^{-1}\circ F_{R}(\lVert u\rVert )}{%
\lVert u\rVert}u, \mbox{ and } \nabla \psi _{3}\left( u\right) = \Sigma
^{1/2}u.
\end{eqnarray*}

It follows from Lemma~\ref{lemma:GS} in Appendix~\ref{app:OT} that $\nabla
\psi _{1}$ is the gradient of a convex function pushing the uniform $\mu $
to $N\left( 0,I_{d}\right) $. Lemma~\ref{lemma:radial-W2} in Appendix~\ref%
{app:OT} implies that both $\nabla \psi _{2}$ and~$\nabla \psi _{3}$ are
gradients of convex functions. We now show that~$\nabla \psi _{3}\circ
\nabla \psi _{2}$ pushes~$N\left( 0,I_{d}\right) $ forward to $P$. Recall
that~$N\left( 0,I_{d}\right) $ is the distribution of~the random vector $%
RU^{(d)}$, where~$R\sim \chi _{\left[ d\right] }$, and~$U^{(d)}$ is uniform
on the unit sphere~$\mathcal{S}^{d-1}$, independent of~$R$. Thus for~$Z\sim
N\left( 0,I_{d}\right) $, it holds that 
\begin{eqnarray*}
\nabla \psi _{3}\circ \nabla \psi _{2}\left( Z\right)  &=&\frac{F_{\tilde{R}%
}^{-1}\circ F_{R}(\lVert Z\rVert )}{\lVert Z\rVert }\Sigma ^{1/2}Z \\
&\overset{d}{=}&\Sigma ^{1/2}F_{\tilde{R}}^{-1}\circ F_{R}(\lVert
~RU^{(d)}\rVert )\frac{RU^{(d)}}{\lVert ~RU^{(d)}\rVert }.
\end{eqnarray*}%
Corollary 3.23 in \citet{NFE:2005} implies that 
\begin{equation*}
\left( \lVert ~RU^{(d)}\rVert ,\frac{RU^{(d)}}{\lVert ~RU^{(d)}\rVert }%
\right) \overset{d}{=}\left( R,U^{(d)}\right) ,
\end{equation*}%
which in turn implies that%
\begin{eqnarray*}
\left( F_{\tilde{R}}^{-1}\circ F_{R}(\lVert ~RU^{(d)}\rVert ),\frac{RU^{(d)}%
}{\lVert ~RU^{(d)}\rVert }\right)  &\overset{d}{=}&\left( F_{\tilde{R}%
}^{-1}\circ F_{R}\left( R\right) ,U^{(d)}\right)  \\
&\overset{d}{=}&\left( \tilde{R},U^{(d)}\right) .
\end{eqnarray*}%
Finally we obtain that $\nabla \psi _{3}\circ \nabla \psi _{2}\left(
Z\right) \overset{d}{=}$~$\tilde{R}\Sigma ^{1/2}U^{(d)}$.

\subsection{Proof of Lemma~\protect\ref{lemma:EVC}}

Noting that $P_{k}$ is the distribution of~$\tilde{R}\Sigma
_{k}^{1/2}U^{(d_{k})}$, $k\leq K$, Lemma~\ref{lemma:EVC} follows from
Theorem~\ref{thm:Sklar}(2) and Lemma~\ref{lemma:MK-ell}.

\subsection{Student's t vector copula reduces to classical Student's t copula%
}

\label{app:stud}

When $d_{k}=1$ for all~$k\leq K$, $Q_{k}^{2}/d_{k}=Q_{k}^{2}\sim F_{1,\nu }$
and $R_{k}\sim \chi _{\left[ 1\right] }$. Then for $Z\sim N\left( 0,1\right) 
$, we obtain that 
\begin{align*}
T_{k}\left( u_{k}\right) & =\nabla \psi _{k}\left( \Phi ^{-1}\left(
u_{k}\right) \right) =\frac{F_{Q_{k}}^{-1}\circ F_{R_{k}}\left( \left\vert
\Phi ^{-1}\left( u_{k}\right) \right\vert \right) }{\left\vert \Phi
^{-1}\left( u_{k}\right) \right\vert }\Phi ^{-1}\left( u_{k}\right) \Sigma
_{k}^{1/2} \\
& =F_{Q_{k}}^{-1}\circ \Pr \left( \left\vert Z\right\vert \leq \left\vert
\Phi ^{-1}\left( u_{k}\right) \right\vert \right) \frac{\Phi ^{-1}\left(
u_{k}\right) }{\left\vert \Phi ^{-1}\left( u_{k}\right) \right\vert }\Sigma
_{k}^{1/2} \\
& =F_{Q_{k}}^{-1}\circ \Pr \left( -\left\vert \Phi ^{-1}\left( u_{k}\right)
\right\vert \leq Z\leq \left\vert \Phi ^{-1}\left( u_{k}\right) \right\vert
\right) \frac{\Phi ^{-1}\left( u_{k}\right) }{\left\vert \Phi ^{-1}\left(
u_{k}\right) \right\vert }\Sigma _{k}^{1/2}.
\end{align*}%
Let $S_{k}$ follow the Student's $t$ distribution with $\nu $ degrees of
freedom. By the relation between the Student's $t$ distribution and the $F$
distribution, it holds that $S_{k}^{2}\sim F_{1,\nu }$. Since $Q_{k}^{2}\sim
F_{1,\nu }$, $Q_{k}$ can be characterized as $\left\vert S_{k}\right\vert $.
Thus, we have that for $q_{k}\geq 0$,%
\begin{align*}
F_{Q_{k}}\left( q_{k}\right) & =\mathrm{Pr}\left( 0\leq Q_{k}\leq
q_{k}\right) =\mathrm{Pr}\left( 0\leq \left\vert S_{k}\right\vert \leq
q_{k}\right)  \\
& =\mathrm{Pr}\left( -q_{k}\leq S_{k}\leq q_{k}\right) =2\left(
F_{S_{k}}\left( q_{k}\right) -1/2\right) ,
\end{align*}%
where $F_{S_{k}}\left( \cdot \right) $ is the distribution function of $S_{k}
$. For $u_{k}\geq 1/2$, it holds that%
\begin{align*}
T_{k}\left( u_{k}\right) & =F_{Q_{k}}^{-1}\circ \Pr \left( -\Phi ^{-1}\left(
u_{k}\right) \leq Z\leq \Phi ^{-1}\left( u_{k}\right) \right) \Sigma
_{k}^{1/2} \\
& =F_{Q_{k}}^{-1}\left( 2\left( u_{k}-1/2\right) \right)
=F_{S_{k}}^{-1}\left( u_{k}\right) \Sigma _{k}^{1/2},
\end{align*}%
and for $u_{k}<1/2$, it holds that 
\begin{align*}
T_{k}\left( u_{k}\right) & =-F_{Q_{k}}^{-1}\circ \Pr \left( \Phi ^{-1}\left(
u_{k}\right) \leq Z\leq -\Phi ^{-1}\left( u_{k}\right) \right) \Sigma
_{k}^{1/2} \\
& =-F_{Q_{k}}^{-1}\left( 2\left( 1/2-u_{k}\right) \right)
=F_{S_{k}}^{-1}\left( u_{k}\right) \Sigma _{k}^{1/2}.
\end{align*}%
Thus, letting $Q_{\nu }$ and $t\left( \cdot ;\nu \right) $ denote the
quantile and density functions of the Student's $t$ distribution with $\nu $
degrees of freedom respectively, we obtain the vector copula density given by%
\begin{equation*}
c^{t}\left( u_{1},\ldots ,u_{K};\Sigma ,\nu \right) =t_{d}\left( Q_{\nu
}\left( u_{1}\right) ,\ldots ,Q_{\nu }\left( u_{K}\right) ;\Omega ,\nu
\right) \prod_{k=1}^{K}\left[ \frac{1}{t\left( Q_{\nu }\left( u_{k}\right)
;\nu \right) }\right] ,
\end{equation*}%
which is the same as the density of the classical $t$ copula, where 
\begin{equation*}
\Omega =\left( 
\begin{tabular}{llll}
$1$ & $\Sigma _{12}/\left( \Sigma _{1}^{1/2}\Sigma _{2}^{1/2}\right) $ & $%
\cdots $ & $\Sigma _{1K}/\left( \Sigma _{1}^{1/2}\Sigma _{K}^{1/2}\right) $
\\ 
$\Sigma _{21}/\left( \Sigma _{1}^{1/2}\Sigma _{2}^{1/2}\right) $ & $1$ & $%
\cdots $ & $\Sigma _{2K}/\left( \Sigma _{K}^{1/2}\Sigma _{2}^{1/2}\right) $
\\ 
$\vdots $ & $\vdots $ & $\ddots $ & $\vdots $ \\ 
$\Sigma _{K1}/\left( \Sigma _{1}^{1/2}\Sigma _{K}^{1/2}\right) $ & $\Sigma
_{K2}/\left( \Sigma _{K}^{1/2}\Sigma _{2}^{1/2}\right) $ & $\cdots $ & $1$%
\end{tabular}%
\right) .
\end{equation*}


\subsection{Proof of Proposition~\protect\ref{SD-VC}}

\label{app:arch}

A $K$ dimensional copula is called Archimedean if it permits the
representation:%
\begin{equation*}
C\left( u\right) =\psi \left( \psi ^{-1}\left( u_{1}\right) +...+\psi
^{-1}\left( u_{K}\right) \right) ,u\in \left[ 0,1\right] ^{K}
\end{equation*}%
for some Archimedean generator $\psi $, namely, $\psi :[0,\infty
)\rightarrow \lbrack 0,1]$ is a nonincreasing and continuous function which
satisfies the conditions: $\psi (0)=1$ and $\lim_{x\rightarrow \infty }\psi
(x)=0$ and is strictly decreasing on $[0,\inf \{x:\psi (x)=0\})$.

It is known that the distribution of $U_{k}$ is the $d_{k}$-dimensional
independence copula and is Archimedean with generator function~$\exp \left(
-u\right)$. Theorem~3.1(ii) and Example~3.2 in \citet{NN:2009} imply that 
\begin{equation}
\left( -\ln \left( U_{k1}\right) ,...,-\ln \left( U_{kd_{k}}\right) \right) 
\overset{}{=}-R_{k}U_{k}^{\left( d_{k}\right) },  \label{SRforU}
\end{equation}%
for some uniform~$U_k^{(d_k)}$ on the unit simplex, and some~$R_{k}$ with
distribution given in (\ref{RDistribution}), independent of~$U_k^{(d_k)}$,
each~$k\leq K$. Inverting the expression in (\ref{SRforU}) yields the result.


\subsection{Derivation of the Kendall vector copula density~(\protect\ref%
{KVC-density})}

\label{app:Ken-Closed}

To see why~(\ref{KVC-density}) holds, let $\widetilde{V}_{k}:=K_{k}\left(
\Pi _{j=1}^{d_{k}}U_{kj}\right) $, where $K_{k}$ is the Kendall distribution
function of the independence copula, i.e.:%
\begin{equation}
K_{k}\left( u\right) :=\mathbb{P}\left( \prod_{j=1}^{d_{k}}U_{kj}\leq
u\right) =u\sum_{i=0}^{d_{k}-1}\frac{\log ^{i}\left( 1/u\right) }{i!}.
\label{eq:K-Dist}
\end{equation}%
The nesting copula of a hierarchical Kendall copula in \citet{Brechmann:2014}
is the copula of $\left( \widetilde{V}_{1},...,\widetilde{V}_{K}\right) $.
Now, since $\widetilde{V}_{k}=K_{k}\left( \exp \left( R_{k}\right) \right) $%
, see Remark~5 in \citet{Brechmann:2014}, we have $F_{R_{k}}(\ln u)=\mathbb{P%
}(\widetilde{V}_{k}\leq K_{k}(u))=K_{k}(u)$, by definition of the Kendall
distribution. It implies that 
\begin{equation*}
V_{k}=F_{R_{k}}\left( R_{k}\right) =F_{R_{k}}\left( \ln \left( \exp \left(
R_{k}\right) \right) \right) =K_{k}\left( \exp \left( R_{k}\right) \right) =%
\widetilde{V}_{k}.
\end{equation*}%
Applying Theorem 8 in \citet{Brechmann:2014} to Kendall vector copulas, we
obtain~(\ref{KVC-density}) as desired. 


\subsection{Some classical parametric copulas}

\label{app:cop-list}

For convenience, we list formulas for the four bivariate copulas we use in
Section~\ref{sec:dm}. See for instance \citet{Nelsen:2006}.

\begin{itemize}
\item Bivariate Gaussian copula with parameter~$\rho$: 
\begin{equation*}
C^{\mbox{\scriptsize Ga}}(u_1,u_2;\rho)=\Phi_2(\Phi^{-1}(u_1),%
\Phi^{-1}(u_2);\rho).
\end{equation*}
Relation to Kendall's~$\tau$: $\tau=\frac{2}{\pi}\arcsin\rho$.

\item Bivariate Clayton copula with parameter~$\theta\in[-1,\infty)%
\backslash\{0\}$: 
\begin{equation*}
C^{\mbox{\scriptsize Cl}}(u_1,u_2,\theta):=\left[ \max\{ u_1^{-\theta} +
u_2^{-\theta}-1,0 \} \right]^{-\frac{1}{\theta}}.
\end{equation*}
Relation to Kendall's~$\tau$: $\tau=\frac{\theta}{\theta+2}$.

\item Bivariate Frank copula with parameter~$\theta\in\mathbb{R}%
\backslash\{0\}$: 
\begin{equation*}
C^{\mbox{\scriptsize Fr}}(u_1,u_2,\theta):=-\frac{1}{\theta} \log\left[ 1+ 
\frac{\left( e^{-\theta u}-1\right)\left( e^{-\theta v}-1\right)}{%
e^{-\theta}-1}\right].
\end{equation*}
Relation to Kendall's~$\tau$: $\tau=1-\frac{4}{\theta}\left(1-\frac{1}{\theta%
}\int_0^\theta\frac{t}{e^t-1}dt\right)$.

\item Bivariate Gumbel copula with parameter~$\theta\in[1,\infty)$: 
\begin{equation*}
C^{\mbox{\scriptsize Gu}}(u_1,u_2,\theta):=\exp\left[ -\left( (-\ln
u_1)^{\theta} + (-\ln u_2)^{\theta} \right)^{\frac{1}{\theta}} \right].
\end{equation*}
Relation to Kendall's~$\tau$: $\tau=\frac{\theta-1}{\theta}$.
\end{itemize}



\bibliographystyle{abbrvnat}
\bibliography{copula}


\end{document}